\documentclass[a4paper,USenglish,cleveref,autoref,thm-restate]{lipics-v2021}
\pdfoutput=1 
\hideLIPIcs  
\usepackage{algorithm}
\usepackage{algpseudocode}
\usepackage{xspace}
\usepackage{xparse}
\usepackage{float}
\usepackage{cite}
\usepackage{hyperref}
\algnewcommand{\algorithmicgoto}{\textbf{go to}}%
\algnewcommand{\Goto}{\algorithmicgoto\xspace}%
\algnewcommand{\Label}{\State\unskip}
\algnewcommand{\LineComment}[1]{{\ttfamily \footnotesize \(\triangleright\) #1}}
\algrenewcommand\algorithmiccomment[1]{{\ttfamily \footnotesize \hfill \(\triangleright\) #1}}
\newdimen{\algindent}
\setlength\algindent{1.5em}
\algnewcommand{\LeftComment}[2]{\hspace{#1\algindent} {\ttfamily \footnotesize \(\triangleright\) #2}}
\newcommand{\eqtag}[1]{\stepcounter{equation}\tag{\theequation: #1}} 
\NewDocumentEnvironment{retheorem}{mo}
    {   \begingroup
        
        \IfNoValueTF{#2} 
            {\begin{theorem}} 
            {\begin{theorem}[#2]}
    }
    {   \end{theorem}
        \addtocounter{theorem}{-1}
        \endgroup
    }

\NewDocumentEnvironment{relemma}{mo}
    {   \begingroup
        
        \IfNoValueTF{#2} 
            {\begin{lemma}} 
            {\begin{lemma}[#2]}
    }
    {   \end{lemma}
        \addtocounter{lemma}{-1}
        \endgroup
    }

\newcommand{\RE}{\mathbb{R}} 
\newcommand{\ZZ}{\mathbb{Z}} 
\newcommand{\eps}{\varepsilon} 
\newcommand{\etal}{\textit{et al.}}
\newcommand{\SP}{\kern+1pt} 

\DeclareMathOperator{\vol}{vol}

\DeclareMathOperator{\radius}{radius}

\DeclareMathOperator{\dist}{\mathcal D (\mathcal P, \mathcal H )}

\newcommand{\wt}[1]{\widetilde{#1}}
\newcommand{\mc}{\mathcal}

\newcommand{\mf}{\mathscr}
\newcommand{\vctr}{\mathbf} 
\newcommand{\spr}[1]{^{(#1)}}
\newcommand{\itr}[1]{^{\langle#1\rangle}}
\newcommand{\adi}[1]{}
\newcommand{\dave}[1]{}
\newcommand{\arxivonly}[1]{}


\title{Evolving Distributions under Local Motion} 

\titlerunning{Evolving Distributions under Local Motion} 

\author{Aditya Acharya}{Department of Computer Science, University of Maryland, College Park, USA}{adach@umd.edu}{https://orcid.org/0000-0002-0359-1913}{}

\author{David M. Mount}{Department of Computer Science and Institute for Advanced Computer Studies, University of Maryland, College Park, USA}{mount@umd.edu}{https://orcid.org/0000-0002-3290-8932}{}

\authorrunning{A.\ Acharya and D.\ M.\ Mount}
\Copyright{Aditya Acharya and David M. Mount}


\ccsdesc[500]{Theory of computation~Computational geometry}
\keywords{Evolving data, tracking, imprecise points, local-motion model,
online algorithms}

\nolinenumbers 


\begin{document}

\maketitle

\begin{abstract}
Geometric data sets arising in modern applications are often very large and change dynamically over time. A popular framework for dealing with such data sets is the evolving data framework, where a discrete structure continuously varies over time due to the unseen actions of an evolver, which makes small changes to the data. An algorithm probes the current state through an oracle, and the objective is to maintain a hypothesis of the data set's current state that is close to its actual state at all times. In this paper, we apply this framework to maintaining a set of $n$ point objects in motion in $d$-dimensional Euclidean space. To model the uncertainty in the object locations, both the ground truth and hypothesis are based on spatial probability distributions, and the distance between them is measured by the Kullback-Leibler divergence (relative entropy). We introduce a simple and intuitive motion model where, with each time step, the distance that any object can move is a fraction of the distance to its nearest neighbor. We present an algorithm that, in steady state, guarantees a distance of $O(n)$ between the true and hypothesized placements. We also show that for any algorithm in this model, there is an evolver that can generate a distance of $\Omega(n)$, implying that our algorithm is asymptotically optimal. 
\end{abstract}

\section{Introduction} \label{sec:intro}

Many modern computational applications are characterized by two qualities: data sets are massive and they vary over time. A fundamental question is how to maintain some combinatorial structure that is a function of such a data set. The combination of size and dynamics makes maintaining such a structure challenging. Given the large data sizes, single-shot algorithms may be too slow, and common dynamic algorithms, which support explicit requests for insertions and deletions, may not be applicable because structural changes are unseen by the algorithm.

Anagnostopoulos \etal introduced a model for handling such data sets, called the \emph{evolving data framework}~\cite{AKM11}. In this framework, the structure continuously varies over time through the unseen actions of an \emph{evolver}, which makes small, stochastic changes to the data set. The algorithm can probe the current state locally through an \emph{oracle}. With the aid of this oracle, the algorithm attempts to maintain a \emph{hypothesis} of the structure's current state that is as close as possible to its actual state, at all times. The similarity between the hypothesis and current state is measured through some \emph{distance function}. The algorithm's objective is to achieve, in the steady-state, a small distance between the hypothesis and the actual state. This framework has been applied to a variety of problems~\cite{AcM22, BDE18a, KLM16, HLS17, ZZL16}.

Consider sorting for example. The data consists of a set of objects over some total order. The evolver continuously selects a random pair of adjacent objects and swaps them. The oracle is given two objects and returns their relative order. The algorithm's objective is to maintain an order that is as close to the current state as possible, where the distance is measured in terms of the Kendall tau distance, that is, the number of pairwise order inversions \cite{Ken38}. It has been shown that a Kendall tau distance of $O(n)$ is achievable, and this is optimal \cite{BDE18a, BDE18b}. It is important to mention that quadratic time algorithms—not the sorting algorithms that are optimal in static scenarios—achieve this distance, which further underscores why the dynamic framework can be contrary to intuition.
The sorting problem has been generalized to tracking labels on a tree in \cite{AcM22}, which lays the foundations for a geometric framework for evolving data.

This paper focuses on the question of how to maintain a set of points whose positions evolve continuously in real $d$-dimensional space, $\RE^d$. In motion-tracking applications, object motion is recorded through various technologies, including GPS-enabled mobile devices \cite{TFK10}, RFID tags \cite{PZT23}, and camera-based sensing~\cite{YLW15}. Examples include the movement and migration of animals on land and in oceans, traffic and transport, defense and surveillance, and analysis of human behavior (see, e.g., \cite{GLW12, Lau14}).

Two essential limiting features of tracking technologies are the cost of data acquisition and imprecision in the data. The cost of data acquisition is dominated by the cost of communicating the positions of the objects being tracked, which can be measured in terms of the number of bits needed to transmit object locations to the processing algorithm. Imprecision arises due to time delays in gathering and reporting positions combined with the fact that objects are in continuous motion. Hence, the exact location of any object can never be known with certainty. This implies that, at best, the algorithm can maintain only an approximation to the current state.

In order to have any hope of bounding the degree of uncertainty, it is necessary to impose restrictions on object motion. In traditional applications of the evolving data framework, the evolver acts randomly. This is not a reasonable assumption in practice, where moving objects are subject to physical laws or may have a sense of agency \cite{HBJ05, KLS18, WLS18}. Much work has focused on realistic models of motion, but these can be difficult to analyze theoretically. A simple, clean model that has been analyzed theoretically assumes that objects have a maximum speed limit, and as time passes an object can be inferred to lie within a ball whose radius grows linearly as a function of the elapsed time since the object's last reported position \cite{BEK19, EvK24, Kah91a}. However, this makes the strong assumptions that an object's motion can be reasonably modeled in terms of a fixed maximum speed limit, independent of its environment. 

In multidimensional space, there is no intrinsic total order, and it is less clear what it means to accurately track imprecise moving objects. Given the inherent uncertainty, we model both the truth, that is, each object's position, and our hypothesis as \emph{spatial probability distributions}. The distance between the actual state of the system and our hypothesis is naturally defined as the \emph{relative entropy} (Kullback-Leibler divergence) between these two distributions. The \emph{Kullback-Leibler} (KL) \emph{divergence} is a fundamental measure in statistics representing the distance between two probability distributions \cite{KL51}. It has numerous applications in statistical inference, coding theory, machine learning among others \cite{CT2012}. In our case, it serves two intuitive purposes. First, it measures how different the truth is from our hypothesis. Second, given its application in coding theory relating to coding penalty \cite{CT2012}, it quantifies the additional information required to encode the actual location of the objects utilizing our hypothesis, in contrast to an entity possessing access to the true distributions.

For the evolution of our data, we adopt a locality-sensitive stochastic motion model. We assume that with each step, an object's motion is constrained by its immediate environment, which we call the \emph{local-motion model}. In this model, the distance that any object can move in a single time step is a fixed fraction of the distance to its nearest neighbor in the set. The support of this object's probability distribution is a region of size proportional to the nearest-neighbor distance. This model has the advantage that it does not impose arbitrary speed limits on the objects, it is invariant under transformations that preserve relative distances (e.g., translation, rotation, and uniform scaling), and it satisfies the observed phenomenon that objects in dense environments have less personal space \cite{Jain1987,HW17}, and exhibit slower movement than those in sparse environments \cite{SSK05,BS23}.

To control the communication complexity in determining object locations, we do not require that the oracle returns the exact object positions. Instead, we present the oracle a Euclidean ball in the form of a center point and radius and the index $i$ of an object. The oracle returns a pair of bits indicating whether the $i$-th object lies within this ball, and (if so) whether there is any other object of the set within this ball, giving us a rough estimate of the distance to the object's nearest neighbor. Since every object's location is subject to uncertainty, the same query on the oracle might result in different outcomes at different times. Our algorithm is robust to such variations. Note however, since we are only interested in containment queries, the exact center and the radius are not critical to our algorithm, further driving down the communication complexity.

In our framework, the evolver and the algorithm operate asynchronously and in parallel. With each step, the evolver selects an arbitrary object of the set and moves it in accordance with the local-motion model. (This need not be random, and may even be adversarial.) This information is hidden from our algorithm. The algorithm selects an object and invokes the oracle on this object. Based on the oracle's response, the algorithm updates the current hypothesized distribution for this point. Thus, the evolver and algorithm are involved in a pursuit game, with the evolver incrementally changing object distributions (possibly in an adversarial manner) and the algorithm updating its hypothesized distributions. The algorithm's goal is to minimize the relative entropy between these distributions, at all times. Our computational model and a formal statement of our results are presented in Section~\ref{sec:problem}.

\section{Problem Formulation and Results} \label{sec:problem}

\paragraph*{Objects} 

The objects that populate our system can be thought of as imprecise points \cite{BLM11, GSS89, LoK10} that are drawn independently from probability distributions that depend on the proximity to other objects. More formally, let $Q =\{\vctr q_1, \ldots, \vctr q_n\} \subset \RE^d$, be a point set of size $n$ in $d$-dimensional Euclidean space, and let $[n]$ denote the index set $\{1, \ldots, n\}$. For each $\vctr q_i \in Q$, let $N_i$ denote the distance to its nearest neighbor in $Q \setminus \{\vctr q_i\}$. Given $\vctr x \in \RE^d$ and nonnegative real $r$, let $\mc B(\vctr x, r)$ denote the closed Euclidean ball of radius $r$ centered at $\vctr x$, and let $\mathrm U(\mc B(\vctr x, r))$ denote the uniform probability distribution over this ball. Given a real $\beta$, where $0 < \beta < 1$, define the \emph{$\beta$-local feature region} of $\vctr q_i$ to be $\mc B(\vctr q_i,\beta N_i)$.
Let $P_i = P_i(\beta)$ denote the uniform probability distribution: $\mathrm U \left(\mc B(\vctr q_i,\beta N_i)\right)$, and let $\mc P = \{P_1, \ldots, P_n\}$. We refer to $\mc P$ as the \emph{truth} or \emph{ground truth}, as it represents the true state of the system, subject to the given imprecision. Together, $Q$ and $\beta$ define a set $n$ independent random variables $\{X_1, \ldots, X_n\}$, with $X_i$ distributed as $P_i$ (see Figure~\ref{fig:model-notation}).

\begin{remark*}
    Our choice of using a uniform probability distribution is not critical to our approach. We use it for the ease of presentation. In Section~\ref{sec:general-prob} we show that our results apply to a much broader class of distributions.
\end{remark*}

\paragraph*{The Local-Motion Model} 

As mentioned in the introduction, there are many ways to model the realistic motion of a collection of agents in an environment. A natural requirement is that each agent's motion is affected by the presence of nearby agents. While there are many ways to incorporate this information (see, e.g., \cite{SSK05}), we have chosen a very simple model, where velocities are influenced by the distance to the nearest neighbor.

We think of the objects of our system as moving continuously over time, and our algorithm queries their state at regular discrete \emph{time steps}. From the algorithm's perspective, objects move, or ``evolve,'' over time in the following manner. Given a parameter $\alpha$, where $0 < \alpha < 1$, at every time step, an agent called the \emph{evolver}, selects an object $i \in [n]$ and moves $\vctr q_i$ by a distance of at most $\alpha N_i$ in any direction of its choosing (see Figure~\ref{fig:model}). While the value of $\alpha$ is known, the action of the evolver, including the choice of the object and the movement, is hidden from the algorithm. The evolver on the other hand, is a \emph{strong adversary}, meaning it has access to our algorithm, along with the input set~\cite{BoE05}.

For a nonnegative integer $t$, let $Q \spr t$ and $\mc P \spr t$ denote the underlying centers and distributions, respectively, at time $t$. To simplify notation, we omit the superscript when talking about the current time.

We assume that, throughout the process, the points $Q \spr t$ are restricted to lie within a \emph{bounding region}. While the exact size and shape of this region is not important, for concreteness we take it to be a Euclidean ball centered at the origin $O$, which we denote by $\mc B_0$. The algorithm has knowledge of this ball. Define the system's initial \emph{aspect ratio} to be 
\[
    \Lambda_0 
        ~ = ~ \frac{\radius(B_0)}{\min_i N_i \spr 0}.
\]
Given any positive constant $c$, let $c \mc B_0$ denote a factor-$c$ expansion of this ball about the origin.

\begin{figure}[htbp]
    \centering
    \begin{subfigure}[t]{0.5\textwidth}
        \centering
        \includegraphics[scale=0.35,page=1]{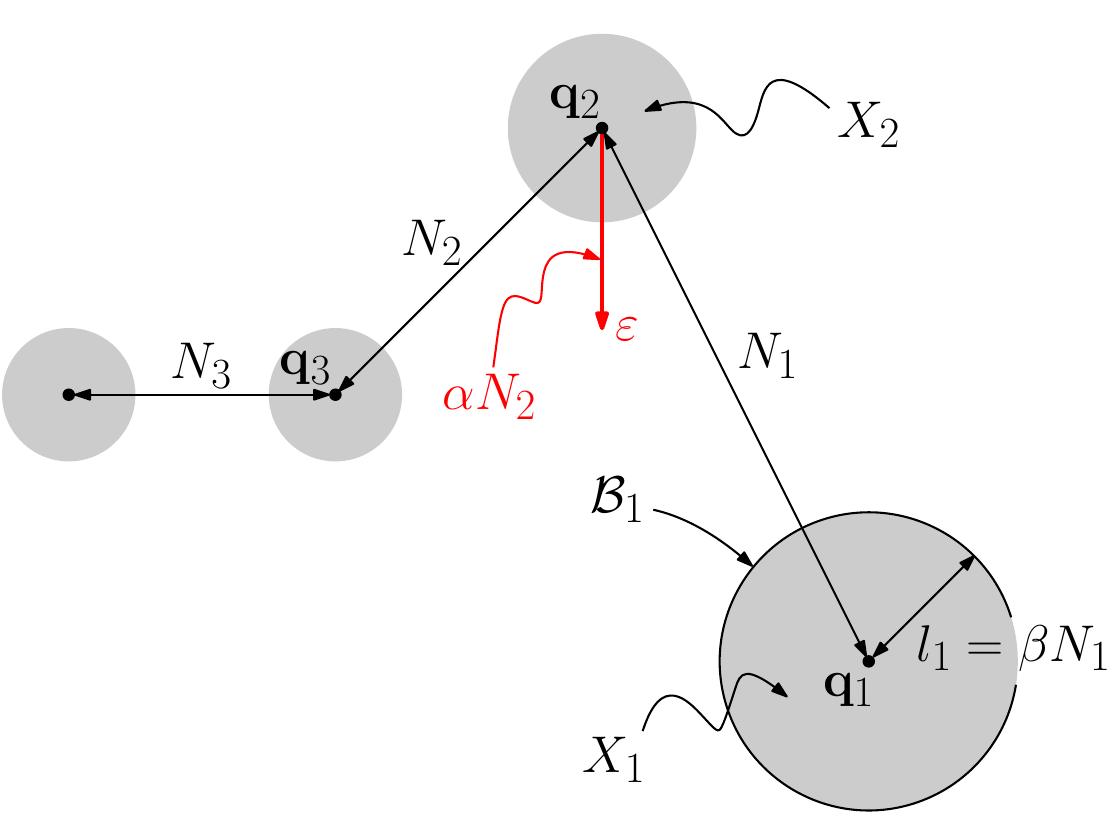}
        \caption{Illustration of the objects, and the notations used. The evolver's action $\eps$ is shown in red.} \label{fig:model-notation}
    \end{subfigure}\hfill
    \begin{subfigure}[t]{0.45\textwidth}
        \centering
        \includegraphics[scale=0.35,page=2]{Figs/model}
        \caption{Objects that were affected by $\eps$ are shown in pink. Tiled regions show the previous state.}
    \end{subfigure}
    \caption{The model and an action by the evolver. Shaded regions represent objects.} \label{fig:model}
\end{figure}

\paragraph*{Oracle}

Rough knowledge about the current state of the system is provided by a radar-like object, an \emph{oracle} $\mc O$, which is given the center $\vctr x$ and radius $r$ of a ball $\mc B(x,r)$, and an object index $i$. The call $\mc O(i, \vctr x, r)$ returns the following regarding the current state of the system:
\begin{itemize}
\item ``$Y$'' or ``$N$'' depending on whether $X_i \in \mc B(\vctr x,r)$
\item ``$+$'' or ``$-$'' depending on whether there exists $j \neq i$, such that $X_j \in \mc B(\vctr x,r)$
\end{itemize}
The first element of the pair is used to estimate the location of object $i$, and the second is employed to estimate the distance to its nearest neighbor. Succinctly, $\mc O: [n] \times \RE^d \times \RE^+ \rightarrow \{Y,N\}\times\{+,-\}$. Observe that because $X_i$ and $X_j$ are random variables, so is $\mc O(i, \vctr x, r)$. Consistent with prior applications of the evolving data framework (see, e.g., \cite{AKM12}), we have intentionally made the oracle as weak as possible, implying that our algorithm can be used with stronger oracles.

\paragraph*{Hypothesis and Distance}

The algorithm maintains a \emph{hypothesis} of the current object locations, which is defined to be a set $\mc H = \{H_1, \ldots, H_n\}$ of spatial probability distributions in $\RE^d$. The \emph{distance} $\mc D$ of the truth $\mc P$ from the current hypothesis $\mc H$, denoted $\mc D (\mc P, \mc H )$, is defined as the sum of $n$ Kullback-Leibler divergences from the hypothesized distributions to the true ones. For $i \in [n]$, let
\[
    D_i
        ~ = ~ D_{\text {KL}} (P_i \parallel H_i) 
        ~ = ~ \int_{\vctr x \in \mc B_i}P_i(\vctr x) \log \frac{P_i(\vctr x)}{H_i(\vctr x)} \mu(d\vctr x),
\]
where $\mu(\cdot)$ denotes the measure over $\mc B_i = \mc B(\vctr q_i,\beta N_i)$ and define 
\begin{equation}
    \mc D (\mc P, \mc H )
        ~ = ~ \sum _{i=1}^n D_i
        ~ = ~ \sum _{i=1}^n D_{\text {KL}} (P_i \parallel H_i). \label{Defn:Dist}
\end{equation}

Next we consider a simple hypothesis, which assumes no information about the locations of the objects, other than the fact that they lie inside the bounding region. Recall that $\mc B_0$ denotes this bounding region and $3 \mc B_0$ denotes a factor-$3$ expansion about its center. Since all the points of $Q$ lie within $\mc B_0$ and $0 < \beta < 1$, $3 \mc B_0$ is guaranteed to contain all the $\beta$-local feature regions. For all $i \in [n]$, define the \emph{na\"{\i}ve hypothesis}, denoted $H_i^*$, to be the uniform distribution over $3 \mc B_0$. Irrespective of the initial truth, the initial distance of the $i$th object satisfies the following bound.
\begin{align*}
    D_i^*
          ~ = ~ \int_{\vctr x \in \mc B _i}P_i(\vctr x) \log \frac{P_i(\vctr x)}{H_i^*(\vctr x)} \mu(d\vctr x) 
        & ~ = ~ \int_{\vctr x \in \mc B _i}P_i(\vctr x) \log\frac{1/(\beta N_i)^d}{1/(3 \cdot\radius(\mc B_0))^d} \mu(d\vctr x) \\
        & ~ = ~ \log \left( \frac{3 \cdot\radius(\mc B_0)}{\beta N_i} \right)^d \int_{\vctr x \in \mc B _i}P_i(\vctr x)\mu(d\vctr x) \\
        & ~ \leq ~ d \left( \log \Lambda_0 + \log \frac{3}{\beta} \right).
\end{align*}
Under our assumption that $d$ and $\beta$ are constants, we conclude that $\mc D^* \in O(n \log \Lambda_0)$. $\Lambda_0$, the initial aspect ratio, depends on the initial configuration of the points of $Q$, and can be quite bad. We therefore ask whether we can achieve a smaller bound on the distance, or better yet maintain a bound which is independent of the original configuration.

The combination of evolver, oracle, and distance function constitute a model of evolving motion, which we henceforth call the \emph{$(\alpha,\beta)$-local-motion model}.

\paragraph*{Class of Algorithms}

We assume the algorithm that maintains the hypothesis runs in discrete steps over time. With each step, it may query the oracle a constant number of times, perform a constant amount of work, and then update the current set of hypotheses. The number of oracle queries is independent of $n$ but may depend on the dimension $d$, the local-feature scale factor $\beta$, and the motion factor $\alpha$. In our case, this work takes the form of updating the hypothesis for the object that was queried. In the purest form of the evolving data framework, the evolver and algorithm alternate~\cite{AKM11}. Instead, similar to the generalized framework proposed in \cite{AcM22}, we assume that there is a fixed \emph{speedup factor}, denoted $\sigma$. When amortized over the entire run, the ratio of the number of steps taken by the algorithm and the evolver does not exceed $\sigma$.

\paragraph*{Objective and Results}

The objective of the algorithm is as follows. Given any starting ground-truth configuration, after an initial ``burn-in'' period, the algorithm guarantees that the hypothesis is within a bounded distance of the truth subject to model assumptions and the given speedup factor. Our main result is that there exists an algorithm with constant speedup factor $\sigma$ that maintains a distance of $O(n)$ in steady state.

\begin{theorem}{}\label{thm:main}
    Consider a set of $n$ evolving objects in $\RE^d$ under the $(\alpha,\beta)$-local-motion model, for constants $\alpha$ and $\beta$, where $0 < \alpha, \beta < \frac{1}{3}$. There exists an algorithm of constant speedup $\sigma$, and burn-in time $t_0 \in O(n \log \Lambda_0 \log\log \Lambda_0)$ such that for all $t \geq t_0$, this algorithm maintains a distance of $O(n)$.
\end{theorem}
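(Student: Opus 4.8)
The plan is to let each hypothesis $H_i$ be an (approximately) uniform density on a ball $\mc B(\hat{\vctr q}_i,\Theta(\hat N_i))$ centered at a current estimate $\hat{\vctr q}_i$ of $\vctr q_i$ with estimated scale $\hat N_i$, but with \emph{polynomially decaying tails} outside that ball, normalized over $3\mc B_0$. The heavy (polynomial rather than compact) support is the crucial point: it keeps $D_i$ finite at all times, and, more importantly, it makes $D_i$ grow only \emph{linearly} in how stale object $i$ has become. Quantitatively, I would first record the elementary geometric facts that a single evolver move displaces $\vctr q_i$ by at most $\alpha N_i$ and multiplies $N_i$ by a factor in $[1-\alpha,1+\alpha]$ (this holds whether the moved object is $i$ itself or any other object), so after $k$ moves affecting object $i$ its feature region $\mc B_i$ is contained in a ball of radius $O\bigl((1+\alpha)^k N_i^{\mathrm{old}}\bigr)$ about $\hat{\vctr q}_i$ and has radius at least $\beta(1-\alpha)^k N_i^{\mathrm{old}}$, where $N_i^{\mathrm{old}}$ was the scale at the last update. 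Substituting these into the KL integral of $P_i$ against the polynomial-tail $H_i$ (and allowing the $O(N_i)$ position error and constant-factor scale error of the estimates) gives $D_i = O(k+1)$, with the hidden constant depending only on $d$, $\alpha$, $\beta$, and the tail exponent.

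Given this bound, the steady-state algorithm is essentially a round-robin schedule: at each step it updates the object least recently updated, re-localizing $\vctr q_i$ to accuracy $O(N_i)$ by binary search on the $Y/N$ output of the oracle and re-estimating $N_i$ to within a constant factor by binary search on the $+/-$ output. Because the previous estimate is off by at most a $(1+\alpha)^{k}$ factor in scale and in position, each such update costs only $O(\log((1+\alpha)^k)) = O(k+1)$ oracle queries; a technical point is that the oracle is randomized ($X_i$ is a uniform sample from $\mc B_i$), so the searches must be run with constant-factor slack, querying balls guaranteed either to contain or to be disjoint from $\mc B_i$, matching the robustness the paper's oracle is designed to allow. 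For the distance bound, observe that each object is updated once per cycle of $n$ algorithm steps, which spans $O(n)$ evolver moves (the algorithm runs a constant factor $\sigma$ faster than the evolver); hence the staleness $k_i$ of object $i$ never exceeds the number of evolver moves in the last cycle, and $\sum_i k_i$ is at most the total number of moves in that window, i.e. $O(n)$. Therefore $\mc D(\mc P,\mc H) = \sum_i D_i = \sum_i O(k_i+1) = O\bigl(n+\sum_i k_i\bigr) = O(n)$ at all times after the first full cycle, and summing the per-update query costs over a cycle gives $\sum_i O(k_i+1) = O(n)$ queries per $n$ steps, so a constant speedup suffices.

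The remaining piece is the burn-in. The na\"{\i}ve hypothesis already has $\mc D^* = O(n\log\Lambda_0)$ and, crucially, finite, so nothing blows up during burn-in; the task is merely to drive every $D_i$ down to $O(1)$, i.e. to localize each object $i$ to scale $\Theta(N_i)$, which from scratch is a binary search over $\mc B_0$ plus a binary search for $N_i$, costing $O(\log(\radius(\mc B_0)/N_i)) = O(\log\Lambda_0)$ queries. Interleaving these first-time localizations with the round-robin maintenance of the objects already localized, and coping with the fact that the algorithm knows neither $\Lambda_0$ nor the individual $N_i$ and must discover them through an adaptive, multi-scale search, yields the stated burn-in $O(n\log\Lambda_0\log\log\Lambda_0)$. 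I expect the burn-in to be the main obstacle. One must show that the adversarial evolver cannot indefinitely stall the localization of any object --- intuitively it can slow a binary search only while the hypothesis radius still dominates $N_i$, a self-limiting regime, and any re-inflation of an object's radius costs the evolver moves it could have spent elsewhere --- and one must argue that the localizations and the ongoing maintenance compose without ever breaking the support condition, and that the multi-scale discovery inflates the cost by only an $O(\log\log\Lambda_0)$ factor. By contrast, the steady-state $O(n)$ guarantee is comparatively robust, resting on just two clean facts: the polynomial tail makes $D_i$ linear in staleness, and staleness sums to $O(n)$ because the evolver makes only one move per step.
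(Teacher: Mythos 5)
Your plan is essentially the paper's plan: use a hypothesis with polynomially decaying tails (the paper uses a product Cauchy) so that the KL divergence of a uniform ball against it grows only logarithmically in the displacement, hence linearly in the number of adversary moves since the last update; localize by oracle queries on concentric balls; and cycle round-robin. Your observation that $D_i = O(k_i+1)$ (staleness plus one) is morally identical to the paper's decomposition into a potential $\Phi_i=\log\bigl(\max(s_i,l_i,h_i)/\sqrt{l_i h_i}\bigr)$, which bounds $D_i$ up to additive and multiplicative constants, combined with the fact that each evolver step raises $\Phi$ by $O(1)$; and your $O(\log\Lambda_0)$ per-object cold-start cost matches the paper's $\Phi_i^{(0)}\approx\log\sqrt{\Lambda_0}$. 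So the route is the same, with slightly different bookkeeping (staleness counts vs. an explicit potential).

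There is, however, a genuine gap in the steady-state argument that you present as the ``comparatively robust'' part. You write that ``each object is updated once per cycle of $n$ algorithm steps, which spans $O(n)$ evolver moves,'' and separately that each update costs $O(k_i+1)$ queries, so the cycle costs $\sum_i O(k_i+1)$ queries. These two statements are in tension: the cycle is not $n$ steps, it is $\sum_i O(k_i+1)$ steps, and $k_i$ is bounded by the number of evolver moves during (roughly) the previous cycle, whose length is in turn $\sum_j O(k_j+1)$. So the claim $\sum k_i = O(n)$, which you invoke to conclude both that the distance is $O(n)$ and that a constant speedup suffices, is the conclusion of a fixed-point argument rather than a direct observation. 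The paper resolves this circularity explicitly: it proves that the iteration time $\Delta^{\langle z\rangle}$ is $\Theta_\sigma(n+\Phi^{\langle z\rangle})$ (Lemma~\ref{lem:iter-time-pot}) and then shows a contraction $\Phi^{\langle z+1\rangle}\le(1-ac_\sigma)\Phi^{\langle z\rangle}+O(n)$ for $\sigma$ a large enough constant (Lemma~\ref{lem:Pot-converge}), which both closes the loop for the steady state and yields the burn-in bound in one stroke. Without some version of this contraction, your steady-state bound is not established even granting $D_i=O(k_i+1)$. Two smaller points: your binary search on the $Y/N$ bit needs care because $X_i$ is a fresh random sample at each probe, so a ball that partially overlaps $\mc B_i$ can answer inconsistently across queries; the paper's zoom-out/zoom-in is designed so that, once $\mc O(i,\vctr k_i,h_i)=(Y,\cdot)$ and $\mc O(i,\vctr k_i,h_i/\beta)=(Y,+)$, a $\frac{3}{1-2\beta}$-expansion is \emph{guaranteed} to contain $\mc B_i$ (Lemma~\ref{lem:exp-factor}), sidestepping oscillation. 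And you correctly flag the burn-in bound as unproved; the paper obtains $O(n\log\Lambda_0\log\log\Lambda_0)$ by running the contraction $O(\log\log\Lambda_0)$ iterations, each of length at most $O(n\log\Lambda_0)$.
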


The algorithm and its analysis will be proved in~\cref{sec:Alg}. Given that no algorithm can guarantee an exact match between hypothesis and truth, it is natural to wonder how close this is to optimal. In~\cref{sec:lower-bound}, we will show that it is asymptotically optimal by showing that for any algorithm and any constant speedup factor, there exists an evolver that can force a distance of $\Omega(n)$ in steady state.

\paragraph*{Why the KL Divergence?}

The principal challenge in generalizing the evolving framework from simple 1-dimensional applications like sorting to a multidimensional setting is the lack of an obvious distance measure that captures how close the hypothesis is to current state. Our approach is motivated by an information-theoretic perspective. The KL divergence $D_i = D_{\text {KL}} (P_i \parallel H_i)$ serves as a measure of how different the actual object distribution $P_i$ is from our hypothesis $H_i$. (Note that the KL divergence is asymmetric, which is to be expected, given the asymmetric roles of the truth and our hypothesis.)

As an example of this information-theoretic approach, consider the following application in coding theory and space quantization \cite{CT2012}. The objective is to communicate the location of an imprecise point $X_i$ with any arbitrary resolution $\delta$. From Shannon's source coding theorem \cite{mackay2003}, the theoretical lower bound for the expected number of bits required for communication is given by the \emph{Shannon entropy} \cite{Sha48}
\[
    b_{P_i} ~ = ~ \sum_{C_\delta \in B_i} P_i(C_\delta) \log \frac{1}{P_i(C_\delta)},
\]
where $C_\delta$ is a cell of size $\delta$ in $\RE^d$, and $P(C)$, via a slight abuse of notation, is the probability associated with a cell $C$ using the underlying probability distribution function $P$. This is roughly achieved by a Huffman coding \cite{Huf52} of the space around $q_i$, which intuitively assigns a number of bits proportional to the log of the inverse of the probability measure associated with an event, which in our case is $X_i$ lying within a cell $C_\delta$ of size $\delta$. 

Since $P_i$'s are not available to us, we use a similar strategy of encoding the space via $H_i$, and therefore in expectation end up using roughly 
\[
    b_{H_i} ~ = ~ \sum_{C_\delta \in B_i} P_i(C_\delta) \log \frac{1}{H_i(C_\delta)}
\]
bits. When $\delta$ is arbitrarily small, the difference $(b_{H_i} - b_{P_i}) \sim D_i$. Therefore, $ \mc D (\mc P, \mc H ) = \sum _{i=1}^n D_i$ roughly represents the extra bits we use, in expectation, to represent the location of $X_i$'s individually using just $H_i$'s, compared to the absolute theoretical limit.

\paragraph*{Paper Organization}

The remainder of the paper is organized as follows. In the next section, we present Theorem~\ref{thm:lower-bound}, which provides a lower bound on the distance achievable by any algorithm in our model. In \cref{sec:Alg} we present the algorithm and analyze its steady-state performance. In \cref{sec:general-prob} we discuss possible relaxations to the assumptions made in our model.

\section{Lower Bound} \label{sec:lower-bound}

In this section, we show that maintaining $\dist \in O(n)$ is the best we can hope to achieve, as far as bounding the distance in terms of the input size is concerned. This is established in the following theorem. Due to space limitations, the proof has been deferred to \cref{prf:thm:lower-bound}. 

\begin{theorem} \label{thm:lower-bound}
For any algorithm $\mc A$, there exists a starting configuration $Q \spr 0$ and an evolver (with knowledge of $\mc A$) in the local-motion model such that, for any positive integer $t_0$, there exists $t \geq t_0$, such that $\mc D \spr t = \mc D (\mc P \spr t, \mc H \spr t) \in \Omega(n)$.
\end{theorem}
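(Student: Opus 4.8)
\emph{Proof proposal.} The plan is: for every algorithm $\mc A$, exhibit one fixed configuration together with a randomized, $\mc A$-aware evolver, and show that at the end of each of a sequence of ``refresh cycles'' the expected divergence is $\Omega(n)$; since cycle-ends occur arbitrarily far in the future, this gives the theorem. All expectations below are over $\mc A$'s, the oracle's, and the evolver's randomness. For the configuration I would place the points on a $d$-dimensional grid of spacing $\Delta=\Theta(1)$ inside $\mc B_0$, two-color it like a checkerboard, and call the $K=\Theta(n)$ points of one color \emph{active}; every active point then has its nearest neighbor among the fixed-color points at distance $\Theta(1)$. Since the evolver below only ever displaces an active point by a small constant fraction of its feature size, all $N_i$ stay $\Theta(1)$ and the instance remains a valid $(\alpha,\beta)$-instance forever.

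The evolver works in cycles. Fix a small constant $c=c(d,\beta)$ and, for each active $i$, a unit vector $\hat u_i$. In one cycle the evolver visits each active $i$ once and, over $m=\lceil c\beta/\alpha\rceil=O(1)$ legal moves, relocates $\vctr q_i$ to $\vctr q_i\spr 0 + \eps_i\, c\beta N_i\, \hat u_i$, where $\eps_i\in\{+1,-1\}$ is a fresh fair coin drawn at the start of that relocation. The two candidate truths $P_i^{+},P_i^{-}$ for object $i$ are then uniform over congruent balls whose centers are $2c\beta N_i$ apart; they differ by total variation $\tau=\Theta(c)$ and have positive Jensen--Shannon divergence. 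One cycle lasts $W=\Theta(mK)=\Theta(n)$ steps.

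The core estimate I would establish is this: fix a cycle and an active $i$, let $O_i$ be all oracle answers $\mc A$ receives during it, and let $q_i$ be the number of those oracle calls that can convey information about $\eps_i$ (calls about $i$ itself, or about one of the $\le 2d$ fixed neighbors currently having $i$ as nearest neighbor). Then
\[
    \mathbb E\bigl[D_i\text{ at the cycle's end}\bigr] ~\ge~ c_1\,(1-c_2\tau)^{\,q_i}, \qquad c_1=\Theta(c)>0,\ c_2=O(d).
\]
This would be proved in three steps. First, because $\eps_i$ is a fresh uniform bit, conditioning on $O_i$ and using that the minimizer over $H$ of $\lambda D_{\text{KL}}(P_i^{+}\parallel H)+(1-\lambda)D_{\text{KL}}(P_i^{-}\parallel H)$ is the mixture $\lambda P_i^{+}+(1-\lambda)P_i^{-}$, followed by a direct evaluation of the resulting quantity (which equals $p\cdot H_b(\,\cdot\,)$, with $p=\Theta(c)$ the ``crescent'' mass, precisely because $P_i^{\pm}$ are uniform over congruent balls), one gets $\mathbb E[D_i]\ge p\cdot H(\eps_i\mid O_i)$. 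Second, $H(\eps_i\mid O_i)$ is at least the Bayes error of testing $\eps_i$ from $O_i$, namely $\tfrac12\bigl(1-\mathrm{TV}(P_{+}^{O_i},P_{-}^{O_i})\bigr)$ (Le~Cam's two-point method). Third, each of the $q_i$ relevant calls has answer-distributions differing by at most $c_2\tau$ in total variation between the $\eps_i{=}{+}$ and $\eps_i{=}{-}$ worlds (the objects are conditionally independent, so the $\le 2d$ affected factors add), so tensorizing $1-\mathrm{TV}$ along $\mc A$'s adaptive query sequence gives $1-\mathrm{TV}(P_{+}^{O_i},P_{-}^{O_i})\ge(1-c_2\tau)^{q_i}$. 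To finish, $\mc A$ makes $O(\sigma n)$ oracle calls per cycle, and each counts toward $q_i$ for only $O(d)$ indices, so $\sum_{i\ \mathrm{active}}\mathbb E[q_i]=O(n)$; since $q\mapsto(1-c_2\tau)^{q}$ is convex and decreasing, two applications of Jensen's inequality yield
\[
    \mathbb E\bigl[\mc D^{(\text{cycle's end})}\bigr] ~\ge~ \sum_{i\ \mathrm{active}}\mathbb E[D_i] ~\ge~ K\,c_1\,(1-c_2\tau)^{\frac1K\sum_i\mathbb E[q_i]} ~\ge~ K\,c_1\,(1-c_2\tau)^{O(1)} ~=~ \Omega(n),
\]
the exponent being a constant (here $c$ is chosen small enough that $c_2\tau<1$).

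The main obstacle is the core estimate itself: that with the $O(1)$ oracle calls $\mc A$ can afford per active point per cycle, no algorithm can pull $D_i$ for a freshly displaced point below a fixed positive constant. The delicate parts are making this robust for \emph{all} constants $\alpha,\beta$ --- the displacement can be forced far smaller than the feature radius, putting the two candidate balls deep into the overlapping regime, where one cannot fall back on disjoint supports and genuinely needs the quantitative Le~Cam/tensorization bound --- together with neutralizing the oracle's nearest-neighbor bit and correctly handling $\mc A$'s adaptive allocation of its query budget across the active points; the smallness of the perturbation constant $c$ is tuned precisely so that the per-query total-variation gap stays harmless against the (constant) speedup.
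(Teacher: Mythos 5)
Your route is genuinely different from the paper's, and it is also incomplete in exactly the place you flag as delicate. The paper's proof (Appendix, Theorem~\ref{thm:lower-bound}) is deterministic and far simpler, and it rests on two observations that your proposal misses.

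First, the paper's evolver waits until time $t_0$ and then, over a window of $n/M$ steps, moves only points that the algorithm never touches during that window. Since the algorithm makes at most $\sigma n/M$ updates in the window, at least $n/2 - \sigma n/M - o(n)$ ``proximal'' points keep their hypothesis $H_i$ frozen; the evolver spends its budget exclusively on those. This eliminates any need to reason about what information $\mc A$ extracts via oracle calls --- there is simply no leakage to bound, so the whole Le~Cam/tensorization machinery (your ``core estimate'') is never needed.

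Second, and more importantly, your premise that ``the displacement can be forced far smaller than the feature radius'' is wrong. The evolver is free to apply $\kappa=O(1)$ consecutive moves to the same point $a_i$, all directed away from its nearest neighbor $b_i$, and because each move is a fixed fraction $\alpha$ of the \emph{current} nearest-neighbor distance, the distance from $a_i$ to $b_i$ grows geometrically by a factor $(1+\alpha)$ per step. After $\kappa = \Theta(\log_{1+\alpha} 2)$ steps the separation at least doubles, and for $\beta < 1/3$ the new local-feature ball $\mc I'_i$ is \emph{disjoint} from the old $\mc I_i$. This is the step that makes the argument elementary: if $D_{a,i}^{(t_0)} = o(1)$ then $\int_{\mc I_i} H_{a,i}^{(t_0)} \ge e^{-o(1)}$ (a Jensen-type cross-entropy bound), and if $D_{a,i}^{(t)}$ were also $o(1)$ after the move then $\int_{\mc I'_i} H_{a,i}^{(t)} \ge e^{-o(1)}$; since $H_{a,i}$ is unchanged and the two supports are disjoint the total mass would exceed $1$, a contradiction. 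Hence $\Omega(n)$ objects contribute $\Omega(1)$ each. You deliberately keep $c$ small so that $\tau$ is small, which pushes you into the overlapping-support regime and forces you to prove the tensorized Le~Cam estimate; the paper instead takes the displacement large and bypasses the whole issue. (Your accounting of ``relevant'' oracle calls is also not airtight: the ``$+$/$-$'' bit of a call $\mc O(j,\vctr x,r)$ for a far-away $j$ with a large ball can in principle reveal whether $X_i$ lies in that ball, so restricting attention to calls about $i$ and its $\le 2d$ neighbors would need additional justification.) If you want to salvage your approach, the cleanest fix is to adopt both of the paper's devices: target only untouched points, and use enough consecutive moves to disjoin the supports; then the contradiction is immediate and no quantitative information bound is required.
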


The proof follows a similar structure to other lower-bound proofs in the evolving data framework~\cite{AKM12, BDE18a, AcM22}. Intuitively, over a period of time of length $c \SP n$, for $c < 1$, the algorithm can inspect the locations of only a constant fraction of points. During this time, the evolver can move a sufficient number of the uninspected points so that the overall distance increases by $\Omega(n)$.

\section{Algorithm} \label{sec:Alg}

In this section, we present our algorithm and analyze its performance. We begin by defining our hypothesis. For every index $i$, we define two parameters: $h_i \in \RE$, and $\vctr{k}_i = (k_{i,1}, \ldots k_{i,d}) \in \RE^d$. For $\vctr x = (x_1, \ldots x_d) \in \RE^d$ we set the hypothesized probability density for the $i$th point to be a probability density function (PDF) for the $d$-dimensional independent Cauchy distribution centered at $\vctr k _i$, with scaling parameter $h_i$ along each axis \cite{Ver23}, that is,
\begin{equation}
\label{Eqn:HypDef}
    H_i(\vctr{x}) 
        ~ = ~ f_{h_i,\vctr{k}_i}(\vctr x) 
        ~ = ~ \frac{1}{\pi^d h_i ^d} \prod _{j=1}^d \left({1+\frac{\left(x_j - {k}_{i,j}\right)^2}{h_i ^2}}\right)^{-1} \eqtag{Hypothesis Def.}
\end{equation}   
If we let $f_{1, \vctr{0}}(\vctr x)$ denote the PDF for the standard $d$-dimensional independent Cauchy distribution (centered at the origin unit scale), we can express this equivalently as
\[
    f_{h_i,\vctr{k}_i}(\vctr x) 
        ~ = ~ \frac{1}{h^d_i} f_{1, \vctr{0}}\left(\frac{\vctr x - \vctr k_i}{h_i}\right), \quad \text{where} \quad
    f_{1, \vctr{0}}(\vctr x) 
        ~ = ~ \frac{1}{\pi^d} \prod _{j=1}^d \left(1+x_j^2\right)^{-1}.
\]

It will be convenient to define the \emph{hypothesis ball} $\mc B^H_i = \mc B (\vctr k_i, h_i)$, which we use to illustrate $H_i$. Note however, that the Cauchy PDF is nonzero over all of $\RE^d$. This is in contrast to the ground-truth PDFs, $P_i$, which have bounded support over $\mc B_i$. Our algorithm modifies the parameters $\vctr k_i$ and $h_i$ in response to information received from the oracle.

\begin{remark*}
    As mentioned above, the truth $P_i$'s have a bounded support and $H_i$'s are defined over all of $\RE^d$. This in stark contrast to the purest form of the evolving framework \cite{AKM11}, where the hypothesis and the truth have identical structures. We relax the framework and note that there is no reason for both of them to have the same structure, as long as we have a notion of distance between them.
\end{remark*}

\subsection{Potential Function} \label{sec:potential}

Due to the subtleties of tracking the Kullback-Leibler divergence, we introduce a potential function $\Phi$ to aid in the analysis. For each object index $i \in [n]$, we define an individual potential function $\Phi_i$, which bounds the distance $D_i$ to within a constant. The overall potential $\Phi$ will be the sum of these functions. 

Before exactly defining $\Phi_i$, let us observe a few things about $H_i$. Let $s_i = \|\vctr q_i - \vctr k_i\|$ be the Euclidean distance between the center of the actual distribution and the center of the hypothesis ball. Let $l_i = \beta N_i$ denote the radius of the local feature of $\vctr q_i$. Also recall that $h_i$ is the radius of the hypothesis ball (see \cref{fig:Potential_notation}). We define the individual and system \emph{potentials} to be
\begin{equation}
    \Phi_i 
        ~ = ~ \log \left(\frac{\max (s_i,l_i,h_i)}{\sqrt{l_i h_i}}\right) 
        \quad \text{and} \quad 
    \Phi ~ = ~ \sum_{i \in [n]}\Phi_i \eqtag{Potential Definition}\label{Eqn:PotDef}
\end{equation}

\begin{figure}[htbp]
    \centering
    \includegraphics[scale = 0.45]{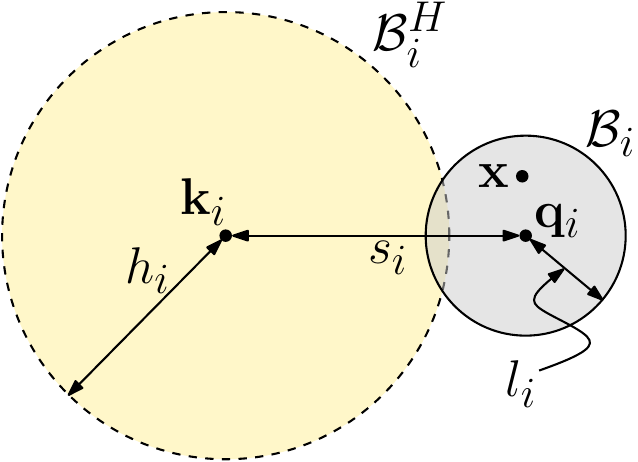}
    \caption{The definition of the individual potential $\Phi_i$.} \label{fig:Potential_notation}
\end{figure}

Let us explore the relationship between the potential and distance. Recall that $P_i(\vctr x)$ is the probability density function of a uniform distribution over a ball of radius $l_i$. Thus, for $\vctr x\in \mc B_i$, $P_i(\vctr x) = \omega_d/l_i^d$, where $\omega_d$ denotes the volume of the unit Euclidean ball in $\RE^d$, and $\mc B_i$. Therefore,
\begin{flalign}
    D_i 
        & ~ = ~ \int _{\vctr x \in \mc B _i} P_i(\vctr x) \log \frac{P_i(\vctr x)}{H_i(\vctr x)} \mu(d\vctr x) \nonumber\\ 
        & ~ = ~ \int _{\vctr x \in \mc B _i} P_i(\vctr x) \log \frac{\omega_d/l_i^d}{(\pi^d h_i ^d)^{-1} \prod _{j=1}^d \left({1+\frac{\left(x_j - {k}_{i,j}\right)^2}{h_i ^2}}\right)^{-1}} \mu(d\vctr x) \nonumber\\
        & ~ = ~ \int _{\vctr x \in \mc B _i} P_i(\vctr x) \log \left(\omega_d \SP \pi^d\right) \mu(d\vctr x) ~ + ~ 
            \int _{\vctr x \in \mc B _i}P_i(\vctr x) \log\frac{h_i^d ~ \prod _{j=1}^d \left({1+\frac{\left( x_j - {k}_{i,j}\right)^2}{h_i ^2}}\right)}{l_i^d} \mu(d\vctr x). \label{Eqn:DistInit}
\end{flalign}
For $\vctr x \in \mc B _i$ and $j \in [d]$, by the triangle inequality, we have
\[
    |x_j - {k}_{i,j}| 
        ~ \leq ~ \|\vctr x - \vctr k_i\| 
        ~ \leq ~ \|\vctr x - \vctr q_i\| + \|\vctr q_i - \vctr k_i\| 
        ~ \leq ~ l_i + s_i.
\]
Therefore,
\begin{flalign*}
    D_i
        & ~ \leq ~ \log\left(\omega_d \SP \pi^d\right) ~ + ~ 
            \int _{\vctr x \in \mc B _i} P_i(\vctr x) \log \left[\frac{h_i^d}{l_i^d} \SP \prod _{j=1}^d \left( 1 + \frac{\left(s_i + l_i\right)^2}{h_i ^2} \right) \right] \mu(d\vctr x) \\
        & ~ \leq ~ \log\left(\omega_d \SP\pi^d\right) ~ + ~ 
            \log \left[\frac{h_i^d}{l_i^d} \left( 1 + \frac{\left(s_i +l_i\right)^2}{h_i ^2} \right)^{\kern-2pt d} \right] \int _{\vctr x \in \mc B _i} P_i(\vctr x)  \mu(d\vctr x) \\
        & ~ = ~ \log\left(\omega_d~ \pi^d\right) ~ + ~
            d \log \left(\frac{h_i^2 + \left(s_i +l_i\right)^2}{l_i ~ h_i} \right)
          ~ \in ~ O(1) + O\left( \frac{\max (s_i,l_i,h_i)}{\sqrt{l_i h_i}} \right).
\end{flalign*}

In summary, we have the following relationship between distance and potential.

\begin{lemma} \label{lem:PotDist}
    Given the potential functions $\Phi_i$ and $\Phi$ defined above, $D_i \in O(1) + O(\Phi_i)$ and $\mc D = \sum D_i \in O(n) + O(\Phi)$.
\end{lemma}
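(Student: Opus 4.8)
The plan is to treat the lemma as a wrap-up of the computation displayed immediately above it. That computation already shows, via the triangle inequality $|x_j - k_{i,j}| \le l_i + s_i$ on $\mc B_i$ and the fact that $\int_{\mc B_i} P_i(\vctr x)\,\mu(d\vctr x) = 1$, that
\[
    D_i ~ \le ~ \log(\omega_d \SP \pi^d) ~ + ~ d\log\!\left(\frac{h_i^2 + (s_i + l_i)^2}{l_i \SP h_i}\right).
\]
So all that remains is to relate the right-hand side to $\Phi_i = \log\bigl(\max(s_i,l_i,h_i)/\sqrt{l_i h_i}\bigr)$. First I would record that $\Phi_i \ge 0$: since $\max(l_i,h_i) \ge \sqrt{l_i h_i}$ by the AM--GM inequality, the numerator of $\Phi_i$ dominates its denominator, so $O(\Phi_i)$ is a legitimate non-negative term and summation will preserve the asymptotic bound. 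Then I would bound the numerator crudely, using $(s_i+l_i)^2 \le 4\max(s_i,l_i)^2$ and $h_i^2 \le \max(s_i,l_i,h_i)^2$ to get $h_i^2 + (s_i+l_i)^2 \le 5\max(s_i,l_i,h_i)^2$.

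Substituting this in gives
\[
    D_i ~ \le ~ \log(\omega_d \SP \pi^d) + d\log\!\left(\frac{5\max(s_i,l_i,h_i)^2}{l_i \SP h_i}\right) ~ = ~ \log(\omega_d \SP \pi^d) + d\log 5 + 2d\,\Phi_i.
\]
Since $d$ (hence $\omega_d$) is treated as a constant, the first two summands are $O(1)$ and the third is $O(\Phi_i)$, so $D_i \in O(1) + O(\Phi_i)$. Summing over $i \in [n]$ then yields $\mc D = \sum_{i=1}^n D_i \le n\bigl(\log(\omega_d \pi^d) + d\log 5\bigr) + 2d\sum_{i \in [n]}\Phi_i \in O(n) + O(\Phi)$, which is the claim.

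I do not expect a genuine obstacle: the lemma is essentially a bookkeeping statement distilling the preceding derivation into a clean inequality. The only points requiring a moment's attention are confirming the sign of $\Phi_i$ (so that the $O$-notation is meaningful and the per-index bounds sum correctly) and checking that every hidden constant depends solely on $d$, not on $n$ or on the initial point configuration. The substantive difficulty — tracking how $\Phi$ changes under the evolver's moves and the algorithm's updates — is deferred to the analysis of the algorithm, and this lemma is exactly what justifies working with the potential $\Phi$ there in place of the divergence $\mc D$.
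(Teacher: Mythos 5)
Your proof is correct and follows essentially the same path as the paper: both start from the derived inequality $D_i \le \log(\omega_d\pi^d) + d\log\bigl((h_i^2 + (s_i+l_i)^2)/(l_i h_i)\bigr)$ and then compare the argument of the log to $\bigl(\max(s_i,l_i,h_i)/\sqrt{l_i h_i}\bigr)^2$. Your version is a bit more careful than the paper's: you make the comparison explicit with the constant $5$, you record the useful observation that $\Phi_i \ge 0$ (needed so that $O(\Phi_i)$ terms sum cleanly), and you implicitly correct a typo in the paper's final line, which writes $O\bigl(\max(s_i,l_i,h_i)/\sqrt{l_i h_i}\bigr)$ where $O\bigl(\log(\max(s_i,l_i,h_i)/\sqrt{l_i h_i})\bigr) = O(\Phi_i)$ is clearly intended.
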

\begin{remark*}[Potential function and approximating pairwise distances]
    The potential function $\Phi_i$ we define here is symmetric. It is remarkable that a symmetric function bounds an asymmetric function $D_i$ under the choice of Cauchy distributions as hypotheses. 

    To demonstrate the value of our potential function, suppose that for $\forall i, \Phi_i < \log c$, for some constant $c$. This implies that $h_i < c\sqrt{l_ih_i}$, which further implies $h_i < c'l_i$ for some constant $c'$. Similarly, $s_i < c\sqrt{l_i h_i} < c'' l_i$ for some constant $c''$. Now, by the triangle inequality, the distance between the centers of two hypothesis balls satisfies
    \[
        \|\vctr k_i - \vctr k_j\| 
            ~ \leq ~ s_i + s_j + \|\vctr q_i - \vctr q_j\| 
            ~ \leq ~ c''l_i + c''l_j + \|\vctr q_i - \vctr q_j\|.
    \]
    Since $l_i = N_i/\beta \leq \|\vctr q_i - \vctr q_j\|/\beta$, we have $\|\vctr k_i - \vctr k_j\| \leq c^*\|\vctr q_i - \vctr q_j\|$, for some constant $c^*$.

    Since $\|\vctr k_i - \vctr k_j\|$ is a constant approximation for $\|\vctr q_i - \vctr q_j\|$, it immediately follows that we can maintain a constant-weight approximation of structures like the Euclidean minimum spanning tree and the Euclidean traveling salesman tour on $Q$ by constructing the same on $\vctr k_i$'s, the centers of the hypotheses $H_i$'s. \hfill \qedsymbol
\end{remark*}

An important feature of our choice of a potential function is that the evolver cannot change its value by more than a constant with each step. Recall that the evolver selects an object index $i$ and moves $q_i$ by at a distance of most $\alpha N_i$ in any direction. This results in at most an $\alpha$ fraction change in the values of $l_i$ and $s_i$, and hence the resulting change in $\Phi_i$ is bounded by a constant. Note, however that the movement of $q_i$ could potentially affect the local feature sizes of a number of other objects in the system. We can show via a packing argument that there are only a constant number of indices $j$, whose $\Phi_j$ value is affected. Furthermore, these values are changed by only a constant. This is encapsulated in the following lemma. Its proof has been deferred to \cref{prf:lem:EvoStep}.

\begin{lemma} \label{lem:EvoStep}
    Each step of the evolver increases the potential $\Phi$ by at most a constant.
\end{lemma}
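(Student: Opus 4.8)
The plan is to bound the one-step increase of $\Phi=\sum_m\Phi_m$ one summand at a time. I write $q_i,k_i,N_i,s_i,l_i,h_i$ for the quantities just before the evolver moves $q_i$ by a distance $\delta\le\alpha N_i$, and attach a prime for the values just after. Since the evolver moves no center but $q_i$, leaves every algorithm-controlled radius $h_m$ untouched, and moves no hypothesis center $k_m$, the only quantities that can change are $s_i$ and the local-feature radii $l_m=\beta N_m$; hence
\[
    \Phi'-\Phi ~=~ (\Phi_i'-\Phi_i)~+~\sum_{m\,:\,N_m'\neq N_m}(\Phi_m'-\Phi_m),
\]
and it will suffice to show that each summand changes by $O(1)$ and that only $O(1)$ summands are nonzero.

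The first step is to control how the nearest-neighbor distances move. For \emph{every} index $m$, moving $q_i$ by $\delta\le\alpha N_i\le\alpha\|q_i-q_m\|$ changes $\|q_i-q_m\|$ by at most an $\alpha$ fraction of itself; since the other points are stationary and already lie at distance $\ge N_m$ from $q_m$, this gives $N_m'\in[(1-\alpha)N_m,\,(1+\alpha)N_m]$ (the upper bound using $N_i\le\|q_i-q_m\|$), so $l_m'/l_m\in[1-\alpha,1+\alpha]$ for all $m$. Next I would argue that $N_m$ changes at all only when $q_i$ is a nearest neighbor of $q_m$ before the move or $q_i'$ is one after it, because otherwise a stationary witness of $N_m$ remains a witness. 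The standard reverse-nearest-neighbor packing bound then finishes this step: if $p$ is a nearest neighbor of both $q_a$ and $q_b$ then $\angle q_a p q_b\ge 60^{\circ}$, so only a constant $\tau_d$ (depending on $d$ alone) of the points of $Q$ can share a common nearest neighbor, whence $|\{m:N_m'\neq N_m\}|\le 2\tau_d=O(1)$.

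The second step is to see that each affected summand changes by only $O(1)$. Writing $\Phi_m=\log\max(s_m,l_m,h_m)-\tfrac12\log l_m-\tfrac12\log h_m$, only the first two terms can move. For $m\neq i$ merely $l_m$ moves, by a factor in $[1-\alpha,1+\alpha]$, so $\max(s_m,l_m',h_m)$ stays within that same factor of $\max(s_m,l_m,h_m)$ and $|\Phi_m'-\Phi_m|=O(1)$. For $m=i$ we also have $s_i'\in[s_i-\alpha N_i,\,s_i+\alpha N_i]$ with $\alpha N_i=(\alpha/\beta)\,l_i$, giving $\max(s_i',l_i',h_i)\le(1+\alpha/\beta)\max(s_i,l_i,h_i)$; for the matching lower bound, the only delicate case is $\max(s_i,l_i,h_i)=s_i$ with $s_i'<s_i/2$, but then $s_i/2<\delta\le(\alpha/\beta)\,l_i$ forces $\max(s_i',l_i',h_i)\ge l_i'\ge(1-\alpha)l_i\ge\tfrac{(1-\alpha)\beta}{2\alpha}\max(s_i,l_i,h_i)$, while otherwise $\max(s_i',l_i',h_i)\ge\tfrac12\max(s_i,l_i,h_i)$ directly. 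Together with $l_i'/l_i\in[1-\alpha,1+\alpha]$, this yields $|\Phi_i'-\Phi_i|=O(1)$. Summing the at most $1+2\tau_d$ nonzero terms gives $\Phi'-\Phi=O(1)$.

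I expect the main obstacle to be the second half of the first step — showing that only $O(1)$ potentials are touched — which hinges on the reverse-nearest-neighbor packing bound, whose constant depends solely on $d$ and so is $O(1)$ under the standing assumption. Everything else is routine factor tracking; the single mild subtlety is the large-$s_i$ case handled above, which is precisely where the hypotheses $\alpha<\tfrac13$ (indeed $\alpha<\tfrac12$ suffices) and the constancy of $\beta$ come into play.
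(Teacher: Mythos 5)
Your proof is correct and takes essentially the same two-step route as the paper: first control the per-index change in $\Phi_m$ via the Lipschitz behavior of $l_m$ (and, for $m=i$, also $s_i$) under a move of size at most $\alpha N_i$, and then bound the number of affected indices by a constant using a reverse-nearest-neighbor packing bound. The only real divergence is cosmetic: the paper establishes the packing constant by projecting each point with $q_i$ as nearest neighbor onto the unit sphere about $q_i$ and packing disjoint balls of radius $1/2$ into a ball of radius $3/2$ (giving $3^d$), while you invoke the classical $60^\circ$-angle/cone-packing form of the same fact; both yield an $O(1)$ bound depending only on $d$. You also prove a two-sided bound $|\Phi_m'-\Phi_m|=O(1)$ (including the careful $s_i'<s_i/2$ case), whereas the lemma and the paper's proof only require the one-sided upper bound; that extra work is harmless but unnecessary.
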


\subsection{The Algorithm}

Next, we present our algorithm. The algorithm runs in parallel to the evolver, running faster by a speedup factor of $\sigma$ (whose value will be derived in our analysis). Each \emph{step} of the algorithm involves a probe of an object by the oracle, and following that, a possible modification of a hypothesis, $H_i$. For ease of analysis we assume the evolver takes at most one step every time unit, and that the number of steps taken by the algorithm over any time interval of length $t$ is $\sigma t$.

Let us begin with a high-level explanation.  The details are provided in \cref{Alg:TrackByZoom} in the appendix. Recall that we are given a set of $n$ objects, each represented by $X_i$, a uniform probability distribution over a ball $\mc B_i = \mc B(\vctr q_i, \beta N_i)$, where $N_i$ is the distance to its nearest neighbor and $\beta$ is the local feature scale. The points $\vctr q_i$ are constrained to lie within a bounding ball $\mc B_0$ centered at the origin. The algorithm maintains a hypothesis in the form of $n$ Cauchy distributions, each represented by a hypothesis ball $\mc B^H_i = \mc B(\vctr k _i, h_i)$, where $\vctr k_i$ is the center of the distribution and $h_i$ is the distribution's scaling parameter.

The algorithm begins with an initial hypothesis, where each hypothesis ball is just $\mc B_0$. This reflects the fact that we effectively assume nothing about the location of $\vctr q_i$, except that it lies within the bounding ball. The algorithm then proceeds in a series of \emph{iterations}, where each iteration handles all the $n$ indices in order. The handling of each index involves two processes, called \emph{zoom-out} and \emph{zoom-in}. 

%
\begin{figure}[htbp]
    \centering
    \begin{subfigure}[t]{0.6\textwidth}
        \centering
        \includegraphics[scale=0.35]{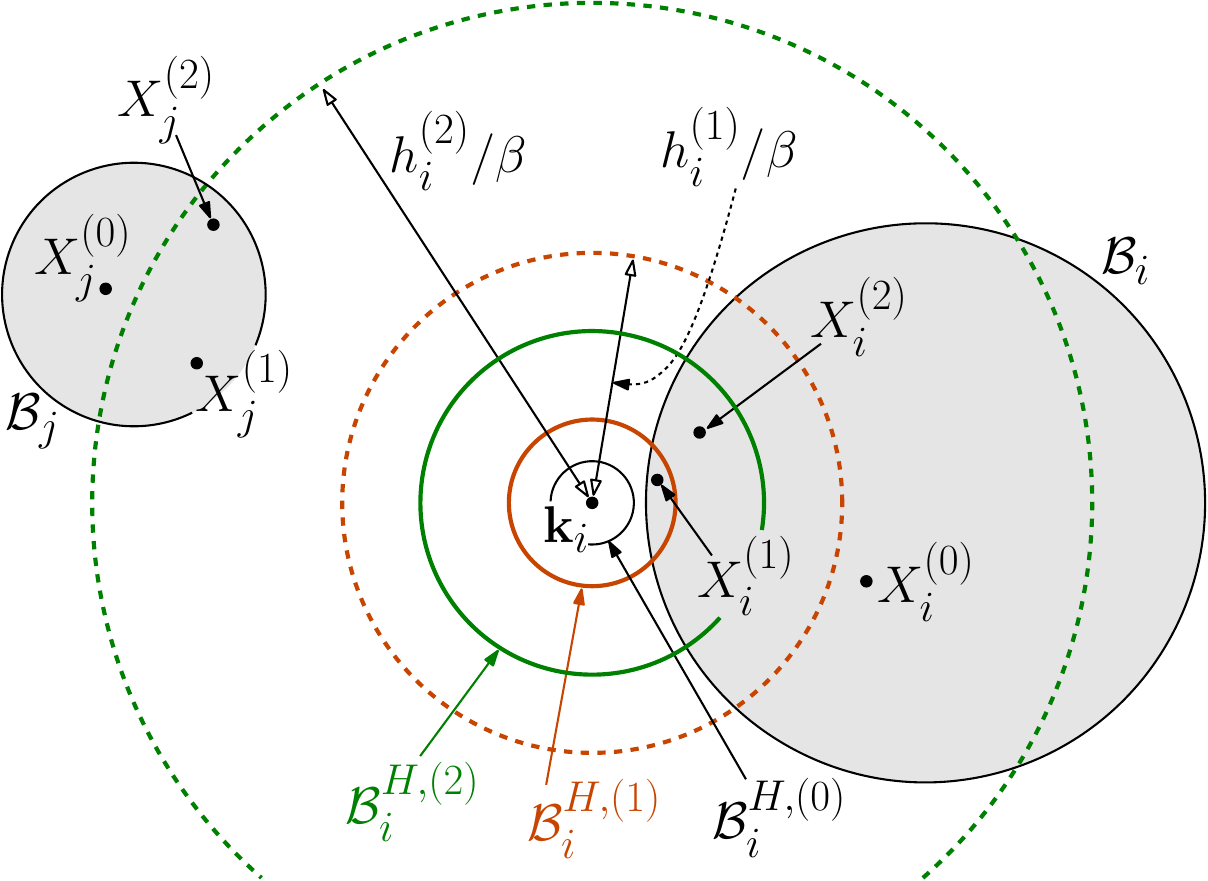}
        \caption{Multiple stages of the zoom-out process starting at $t=0$. Stage $t=2$ is the first time $\mc B^{H,(t)}_i$ contains $X_i$, and its $1/\beta$-expansion contains $X_j$.}
    \end{subfigure}\hfill
    \begin{subfigure}[t]{0.35\textwidth}
        \centering
        \includegraphics[scale=0.35]{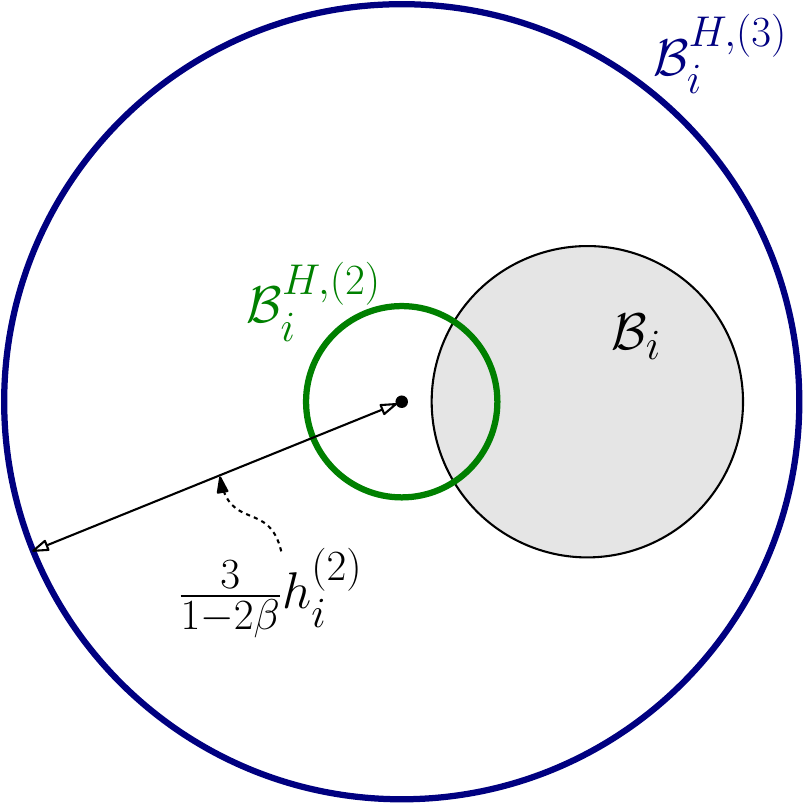}
        \caption{$\mc B_i^{H,(3)}$ is the hypothesis ball at the end of the zoom-out process. Note that $\mc B_i \subset \mc B_i^{H,(3)}$.}
    \end{subfigure}
    \caption{The zoom-out process of \ref{Alg:TrackByZoom} (Line \ref{alg:line:zoomout}).} \label{fig:zoom-out}
\end{figure}

In the zoom-out process, we query the oracle on index $i$ to check whether (1) the sampled point $X_i$ lies within its hypothesis ball and (2) at least one other $X_j$ lies within a concentric ball whose radius is larger by a factor of $\frac{1}{\beta}$. If so, we expand the hypothesis ball by a factor of $3/(1 - 2\beta)$. (We show in Lemma~\ref{lem:exp-factor} that this guarantees that the hypothesis ball $\mc B_i^H$ now contains the local feature ball $\mc B_i$.) We then proceed to the zoom-in process. If not, we double the hypothesis ball and repeat (see \cref{fig:zoom-out}).  

\begin{figure}[htbp]
    \centering
    \begin{subfigure}[t]{0.5\textwidth}
        \centering
        \includegraphics[scale=0.35]{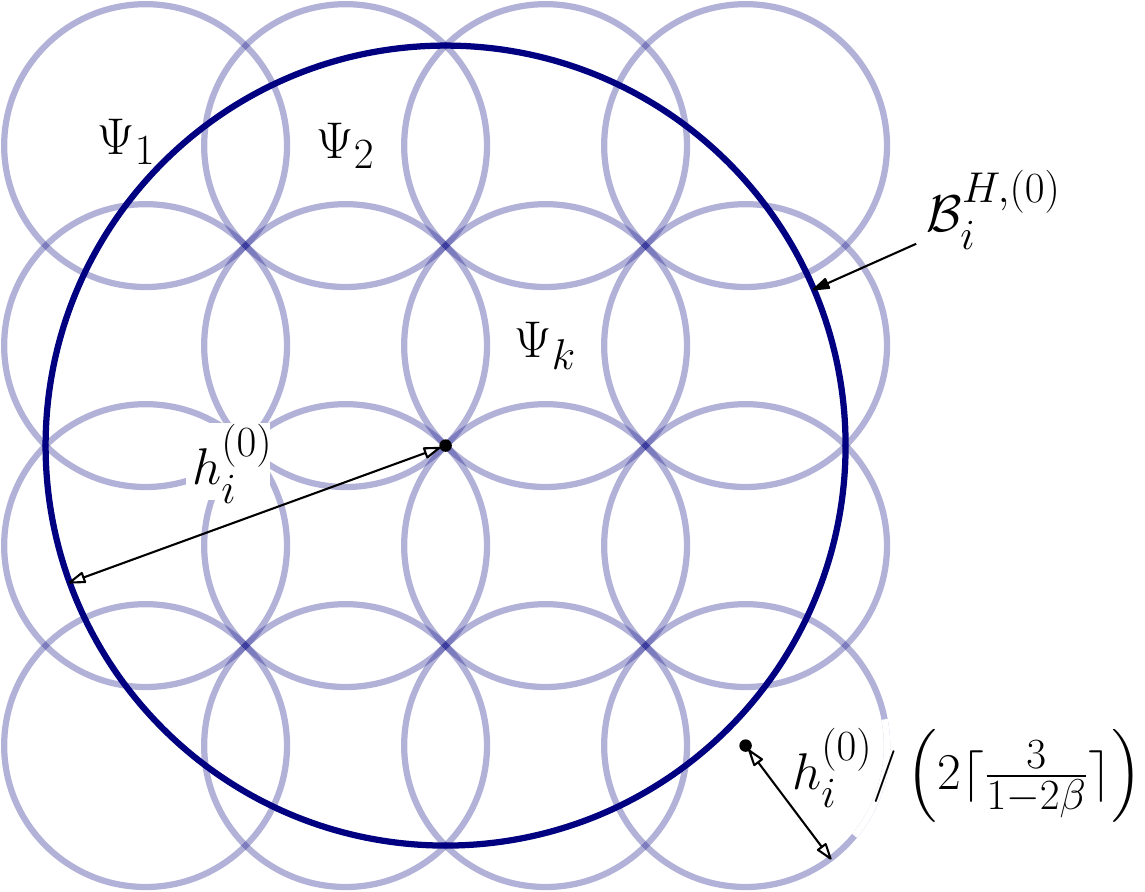}
        \caption{Set of nested balls covering $\mc B_i^{H,(0)}$, which we call $\Psi$. $\radius(\Psi_i) = \radius(\mc B_i)/\left(2\lceil \frac{3}{1-2\beta} \rceil\right)$, and $|\Psi|\in O(1)$.} \label{fig:nested-balls}
    \end{subfigure}\hfill
    \begin{subfigure}[t]{0.45\textwidth}
        \centering
        \includegraphics[scale=0.35]{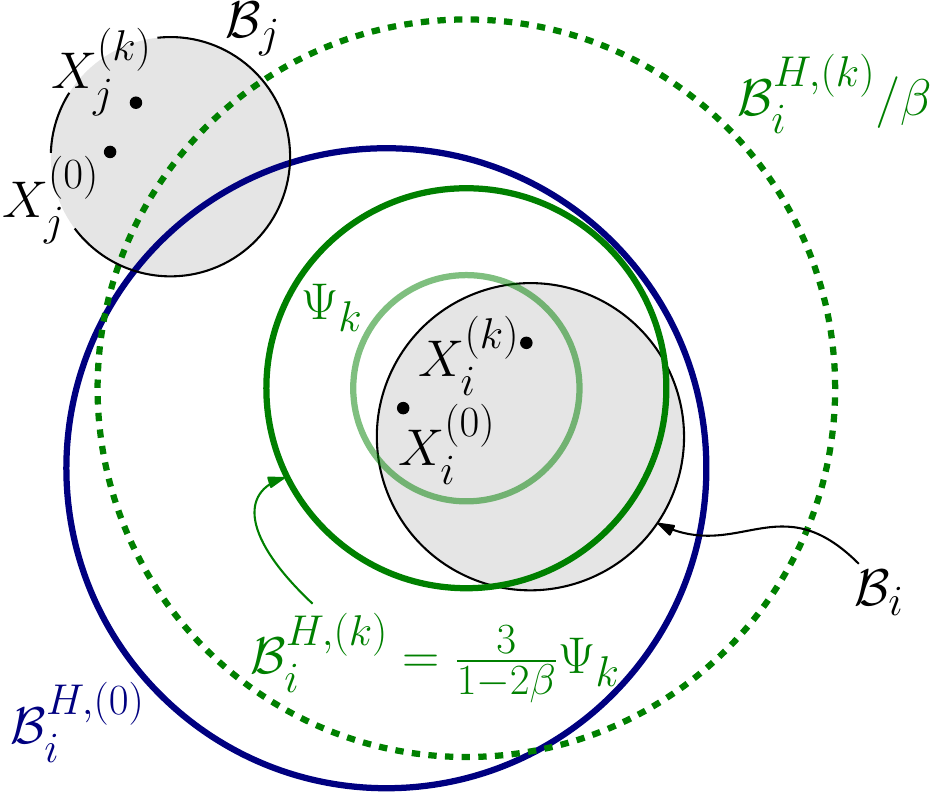}
        \caption{$\Psi_k$ is the first nested ball to contain $X_i$. $\mc B^{H,(k)}_i$ contains $X_i$, and its $1/\beta$-expansion does not contain $X_j$. \ref{Alg:TrackByZoom} moves on to index $i+1$.}
    \end{subfigure}
    \caption{Illustration of the zoom-in process in \ref{Alg:TrackByZoom} (Line \ref{alg:line:zoomin}).} \label{fig:zoom-in}
\end{figure}

In the zoom-in process, we first check whether $X_i$ lies within the hypothesis ball. (The evolver could have moved $q_i$.) If not, we return to the zoom-out process. If so, we next check the $\frac{1}{\beta}$-expansion of this ball. If there is no other $X_j$ in this expanded ball, we accept the current hypothesis for $i$, and go on to the next point in the set. (We show in \cref{lem:almst-trckd} that $\Phi_i$ at this point in time is bounded by a constant). If there is such an $X_j$ however, we may need to shrink the hypothesis ball for the current index. We cover the hypothesis ball with $O(1)$ balls whose radii are smaller by a constant factor, and we test whether $X_i$ lies within any of them (see \cref{fig:zoom-in}). As soon as we find one, we shrink the hypothesis ball about this ball. We repeat this process until one of the earlier conditions applies (causing us to return to the zoom-out process or move on to the next point).


\subsection{Analysis}

In this section, we analyze the distance that the algorithm maintains with the ground truth. Recall that the initial hypotheses are based on the bounding ball $\mc B_0$, which is centered at the origin. Letting $R_0 = \radius(\mc B_0)$, we have $H_i \spr 0 (\vctr x) = f_{\vctr 0,R_0}(\vctr x)$, as defined in Eq.~\eqref{Eqn:HypDef}. Using Eq.~\eqref{Eqn:PotDef}, the initial potential function satisfies:
\begin{equation}\label{eqn:pot-zero}
    \Phi_i \spr 0 = \log \sqrt{R_0/l_i \spr 0}, \quad \text{and} \quad \Phi \spr 0 \in O(n \log \Lambda_0) \quad\quad (\Lambda_0\text{:Aspect Ratio})
\end{equation}



Let us consider how \cref{Alg:TrackByZoom} affect the potential. For a particular index $i$, during the \emph{zoom-out} process, we double the radius $h_i$ of the hypothesis ball. If the Euclidean distance between the centers of the hypothesis and the local-feature ball is much larger than $h_i$, then $\Phi = \log(\max(s_i,l_i,h_i)/\sqrt{l_ih_i}) = \log(s_i/\sqrt{l_ih_i})$ decreases by an additive constant. However, once $h_i$ is larger than $s_i$, the algorithm risks increasing $\Phi$. We show that number of steps where $\Phi$ increases is a constant for every index $i$. During the zoom-in process however, we maintain the invariant that the hypothesis ball contains the local-feature ball (\cref{lem:exp-factor}), implying that $\max(s_i,l_i,h_i) = h_i$. Therefore, $\Phi$ decreases by a constant amount whenever we shrink the hypothesis ball by a constant factor. However, if the evolver moves $q_i$ while index $i$ is being processed, our algorithm may transition to the zoom-out process again. Hence, if the evolver executes some $\eps \itr z$ steps in the $z$th iteration, our algorithm may increase $\Phi$ by only $O\left(n+\eps \itr z\right)$. We obtain the following lemma. The proof is deferred to  \cref{prf:lem:alg-steps}

\begin{lemma}[Algorithm’s contribution towards $\Phi$]\label{lem:alg-steps}
    For any iteration $z$, let $\eps \itr z$ denote the number steps executed by the evolver. Then in the $z$th iteration \ref{Alg:TrackByZoom} increases $\Phi$ in $O\left(n+\eps \itr z\right)$ steps, each time by $O(1)$. In each of the remaining steps of the $z$th iteration it decreases $\Phi$ by $\Theta(1)$ in an amortized sense.
\end{lemma}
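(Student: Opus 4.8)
The plan is to decompose iteration $z$ into the $n$ per-index processing blocks and to classify each of the algorithm's steps inside a block as \emph{potential-increasing} or \emph{potential-decreasing}. The key structural observation is that a single algorithm step only ever modifies the hypothesis of the index $i$ it is currently handling, so it changes the potential only through the term $\Phi_i$; hence it suffices to bound, index by index, the number of increasing steps and to check that every other step decreases $\Phi_i$ by $\Theta(1)$. The evolver's direct effect on $\Phi$ is charged separately via Lemma~\ref{lem:EvoStep}; here we count only algorithm steps, but we must still track how an evolver move can push the algorithm back into a zoom-out.

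First I would analyze a single zoom-out phase for index $i$, where the algorithm repeatedly doubles $h_i$ and then performs one terminating expansion by $3/(1-2\beta)$. Writing $\Phi_i = \log\bigl(\max(s_i,l_i,h_i)/\sqrt{l_i h_i}\bigr)$, as long as $h_i \le \max(s_i,l_i)$ a doubling leaves the numerator unchanged and multiplies $\sqrt{l_i h_i}$ by $\sqrt 2$, so $\Phi_i$ drops by $\tfrac12\log 2 = \Theta(1)$; these are decreasing steps. Once $h_i$ exceeds $\max(s_i,l_i)$, I claim the zoom-out must stop within $O(1)$ further doublings: the support of $X_i$ lies in $\mc B(\vctr q_i,l_i)$ with $\|\vctr q_i-\vctr k_i\|=s_i$, so $h_i\ge s_i+l_i$ already forces $\mc B_i\subseteq \mc B^H_i$ (hence $X_i\in\mc B^H_i$); and since the nearest neighbor $\vctr q_j$ satisfies $N_j\le N_i$, once $h_i/\beta$ is a suitable constant multiple of $\max(s_i,l_i)$ the $\tfrac1\beta$-expansion contains $\mc B_j$, hence $X_j$. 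The oracle then returns $Y{+}$, the zoom-out does its single terminating expansion, and halts. So one zoom-out phase has only $O(1)$ increasing steps (the overshoot doublings plus the terminating expansion), each changing $\Phi$ by $O(1)$.

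Next I would handle the zoom-in phase and then sum up. By Lemma~\ref{lem:exp-factor} the invariant $\mc B_i\subseteq\mc B^H_i$ holds throughout zoom-in, so $\max(s_i,l_i,h_i)=h_i$ and $\Phi_i=\tfrac12\log(h_i/l_i)$; consequently every shrink step, which replaces $\mc B^H_i$ by one of the $O(1)$ covering balls whose radius is a fixed constant factor smaller, decreases $\Phi_i$ by $\Theta(1)$. The $O(1)$ oracle probes used to locate the covering ball containing $X_i$, together with the ``is $X_i$ still inside / is some $X_j$ in the $\tfrac1\beta$-expansion'' checks, carry no potential change; charging each such bundle to the following shrink (or to the block's termination) gives the amortized $\Theta(1)$ decrease per step and shows zoom-in contributes no increasing steps. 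A zoom-in block ends either by advancing to index $i+1$, at which point $\Phi_i=O(1)$ by Lemma~\ref{lem:almst-trckd}, or — if the evolver has meanwhile displaced $X_i$ from $\mc B^H_i$ — by returning to zoom-out. Now there is one ``primary'' zoom-out per index per iteration, contributing $O(n)$ increasing steps overall (in iteration $1$ the primary zoom-out starts at $h_i=R_0$, which already meets the termination condition, so it costs only $O(1)$ steps and the whole reduction of $\Phi_i$ to $O(1)$ happens in the decreasing zoom-in). Every additional zoom-out for index $i$ is triggered by an evolver step that pushed $X_i$ out of $\mc B^H_i$ while the algorithm was handling $i$; since the algorithm handles one index at a time and, by the packing argument behind Lemma~\ref{lem:EvoStep}, each evolver step disturbs only $O(1)$ indices, the number of re-triggered zoom-outs over iteration $z$ is $O(\eps\itr z)$, each costing $O(1)$ increasing steps by the zoom-out analysis. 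Adding these gives increasing steps on $O(n+\eps\itr z)$ steps, each by $O(1)$, with every remaining step decreasing $\Phi$ by $\Theta(1)$ amortized.

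I expect the main obstacle to be the bookkeeping around evolver interference: making precise that each evolver step forces only $O(1)$ extra increasing steps — i.e., that re-triggered zoom-outs can be charged one-to-one to evolver steps and that each runs for only $O(1)$ overshoot doublings before terminating — which requires verifying that a re-triggered zoom-out really starts from a state with $\max(s_i,l_i,h_i)=O(l_i)$ (using that $h_i$ was just being shrunk in zoom-in and that one evolver step perturbs $l_i$ and $s_i$ by only an $O(\alpha)$ fraction), and hence tracking how far the evolver can have moved things since index $i$ was last nearly tracked. The remaining accounting is the routine $\Theta(1)$-per-step argument above.
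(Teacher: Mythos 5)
Your proposal follows essentially the same three-part structure as the paper's proof: (i) during zoom-out, each doubling decreases $\Phi_i$ by $\tfrac12\log 2$ as long as $h_i$ is not the maximum, and once it is, the oracle conditions for terminating zoom-out are met within $O(1)$ further doublings; (ii) during zoom-in, Lemma~\ref{lem:exp-factor} keeps $\mc B_i\subseteq\mc B^H_i$, so $\max(s_i,l_i,h_i)=h_i$ and each shrink (by a factor at least $2$) drops $\Phi_i$ by a constant, amortized over the $|\Psi|\in O(1)$ probes; (iii) evolver interference re-triggers zoom-out at most $O(1)$ times per evolver step, contributing $O(\eps\itr z)$ extra increasing steps. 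This is the paper's argument, just with slightly more explicit accounting of the ``primary'' versus ``re-triggered'' zoom-outs and of why a re-triggered zoom-out starts in a state with $\max(s_i,l_i)=O(h_i)$; the paper summarizes the same point more tersely as each evolver step adding at most a constant. One small note: the appeal to the packing argument of Lemma~\ref{lem:EvoStep} for counting re-triggers is unnecessary overkill, since the algorithm processes only one index at a time, so each evolver step can cause at most one re-trigger directly; the packing argument is only needed (as in the paper) to bound the evolver's own contribution to $\Phi$, which is charged separately.
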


Let $\Phi \itr z$ be the potential function at the start of the $z$th iteration. And let $\Delta \itr z$ be the time taken by the $z$th iteration of the algorithm. At the conclusion of the \emph{zoom-in} process for a particular index $i$, the algorithm fixes a hypothesis $H_i$ which is at a reasonable distance from the truth $P_i$. In fact, we show in \cref{lem:almst-trckd} that at this point in time, $\Phi_i$ is $\phi_0$, a constant. Therefore by \cref{lem:alg-steps}, \ref{Alg:TrackByZoom} running at a constant speedup $\sigma$ roughly spends $\sim \Phi \itr z _i$ time (within constant factors), in reducing $\Phi \itr z _i$ to $\phi_0$. Summing up over all indices, we see that the time taken by $z$th iteration, $\Delta \itr z$, is $\sim \Phi \itr z$. Intuitively, for a large enough speedup, the actions of the evolver can only affect $\Delta \itr z$ by a small fraction of this amount. We summarize these observations in the following lemma. The proof is deferred to \cref{prf:lem:iter-time-pot}

\begin{lemma}[Iteration time proportional to Potential]\label{lem:iter-time-pot}
    There exists a constant speedup factor $\sigma$ for \ref{Alg:TrackByZoom} such that $\Delta \itr z \in \Theta_\sigma\left(n+\Phi \itr z\right)$.
\end{lemma}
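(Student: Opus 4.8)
The plan is to pin $\Delta\itr z$ between two linear functions of $n+\Phi\itr z$ by a charging argument against the potential, and to absorb the evolver's interference by choosing the speedup $\sigma$ large enough. Write $A\itr z=\sigma\,\Delta\itr z$ for the number of algorithm steps in the $z$th iteration (the algorithm performs $\sigma t$ steps per $t$ units of time), and recall that the evolver takes at most one step per unit time, so its number of steps satisfies $\eps\itr z\le\Delta\itr z$. I will use throughout that $\Phi=\sum_i\Phi_i\ge 0$ (each $\Phi_i\ge 0$ since $\max(s_i,l_i,h_i)\ge\sqrt{l_ih_i}$), and I split $A\itr z=A^{+}+A^{-}$ into the steps that increase $\Phi$ and those that decrease it. By \cref{lem:alg-steps}, $A^{+}=O(n+\eps\itr z)$ and each such step raises $\Phi$ by $O(1)$; by \cref{lem:EvoStep} each of the $\eps\itr z$ evolver steps raises $\Phi$ by $O(1)$ as well. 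Hence the total amount by which $\Phi$ increases during the iteration is $I=O(n+\eps\itr z)$, and $\Phi\itr{z+1}=\Phi\itr z+I-D$, where $D$ denotes the total amount by which the $A^{-}$ steps (and any $\Phi$-decreasing evolver steps) lower $\Phi$.

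For the upper bound, $\Phi\itr{z+1}\ge 0$ gives $D\le\Phi\itr z+I=O(\Phi\itr z+n+\eps\itr z)$; since \cref{lem:alg-steps} says each decreasing step lowers $\Phi$ by $\Omega(1)$ in the amortized sense (with $O(n)$ additive slack over the iteration), we get $A^{-}=O(\Phi\itr z+n+\eps\itr z)$, and therefore $A\itr z=\sigma\,\Delta\itr z\le c_1(n+\Phi\itr z)+c_2\,\eps\itr z\le c_1(n+\Phi\itr z)+c_2\,\Delta\itr z$ for absolute constants $c_1,c_2$. Choosing any constant $\sigma>c_2$ and solving yields $\Delta\itr z\le\frac{c_1}{\sigma-c_2}\,(n+\Phi\itr z)\in O_\sigma(n+\Phi\itr z)$. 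The same inequality holds verbatim for every prefix of the iteration (with $\Phi\itr{z+1}$ replaced by the current, still nonnegative, potential), which also certifies that the iteration terminates within the stated time and thereby resolves the apparent circularity between $\Delta\itr z$ and $\eps\itr z$.

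For the lower bound, the iteration handles all $n$ indices and the handling of index $i$ opens with at least one oracle probe of $i$, so $A\itr z\ge n$ and $\Delta\itr z=A\itr z/\sigma\in\Omega_\sigma(n)$; in particular $n\le\sigma\,\Delta\itr z$. To bring in $\Phi\itr z$, note that by \cref{lem:almst-trckd} the value $\Phi_i$ equals the constant $\phi_0$ at the instant index $i$ is finished, and afterwards $\Phi_i$ is changed only by evolver steps (processing later indices cannot alter $\Phi_i$); since the evolver contributes $O(\eps\itr z)$ in total by \cref{lem:EvoStep}, this gives $\Phi\itr{z+1}=O(n+\eps\itr z)$. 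On the other hand, every one of the $\sigma\,\Delta\itr z$ algorithm steps and $\eps\itr z\le\Delta\itr z$ evolver steps changes $\Phi$ by $O(1)$ amortized, so $\Phi\itr z\le\Phi\itr{z+1}+O(\sigma\,\Delta\itr z+n)=O(n+\eps\itr z+\sigma\,\Delta\itr z)=O(\sigma\,\Delta\itr z)$, using $\eps\itr z\le\Delta\itr z$ and $n\le\sigma\,\Delta\itr z$. Combining, $n+\Phi\itr z\in O_\sigma(\Delta\itr z)$, that is, $\Delta\itr z\in\Omega_\sigma(n+\Phi\itr z)$, and together with the upper bound this yields $\Delta\itr z\in\Theta_\sigma(n+\Phi\itr z)$.

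The crux, and essentially the only place genuine care is needed, is the two-way dependence between the iteration length $\Delta\itr z$ and the number of evolver steps $\eps\itr z$ it contains: \cref{lem:alg-steps} charges the algorithm's wasted work to $\eps\itr z$, but $\eps\itr z$ can only be as large as $\Delta\itr z$ permits. The fix is the trivial bound $\eps\itr z\le\Delta\itr z$ together with taking the speedup $\sigma$ strictly above the constant $c_2$ that quantifies how much extra work the evolver can inflict per unit time; any such constant $\sigma$ works and can be chosen no smaller than whatever speedup the surrounding analysis demands. A minor secondary point is that the guaranteed per-step decrease of $\Phi$ is only amortized, which forces an $O(n)$ additive slack; this is harmless since $n$ already appears in the target bound.
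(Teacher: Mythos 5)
Your proposal is correct and follows essentially the same charging argument as the paper: bound the algorithm's potential decrease via Lemmas~\ref{lem:alg-steps}, \ref{lem:EvoStep}, and~\ref{lem:almst-trckd}, and absorb the evolver's interference by the linear self-bound $\eps\itr z\leq\Delta\itr z$ together with a sufficiently large constant~$\sigma$. Your lower-bound argument is organized a bit more carefully than the paper's --- you establish $\Delta\itr z\geq n/\sigma$ directly from the fact that every index is probed, and separately derive $\Phi\itr z\in O(\sigma\Delta\itr z)$ from the $O(1)$-per-step bound on $|\Phi$-change$|$ together with $\Phi\itr{z+1}\in O(n+\eps\itr z)$, which cleanly handles the regime $\Phi\itr z\ll n$ that the paper's expression $\Phi\itr z - n\phi_0 - \Phi_\eps\itr z=\Omega(n+\Phi\itr z-\Delta\itr z)$ glosses over; your prefix argument resolving the implicit circularity between $\Delta\itr z$ and $\eps\itr z$ is also a nice touch that the paper leaves unstated.
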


We are now ready to derive how $\Phi \itr z$ changes over the course of multiple iterations. We show that for a sufficiently large (constant) speedup $\sigma$, after $\sim \log \log \Lambda_0$ iterations $\Phi$ converges to $O(n)$. (Recall that $\Lambda_0$ is the initial aspect ratio.)

\begin{lemma}\label{lem:Pot-converge}
    There exists a constant speedup factor $\sigma$ for \cref{Alg:TrackByZoom} and $z_0 \in \ZZ$, such that for all $z > z_0$, $\Phi \itr z \in O(n)$
\end{lemma}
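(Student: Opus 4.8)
The plan is to set up a recurrence for the sequence $\Phi\itr z$ and show it contracts toward $O(n)$. By Lemma~\ref{lem:iter-time-pot}, the $z$th iteration takes time $\Delta\itr z \in \Theta_\sigma(n + \Phi\itr z)$, so the evolver executes $\eps\itr z \le \Delta\itr z \in O_\sigma(n + \Phi\itr z)$ steps during it. By Lemma~\ref{lem:EvoStep}, each evolver step raises $\Phi$ by at most a constant $c_e$; by Lemma~\ref{lem:alg-steps}, the algorithm raises $\Phi$ on $O(n + \eps\itr z)$ steps (each by $O(1)$) and on every other step of the iteration it decreases $\Phi$ by $\Theta(1)$ amortized. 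The number of "remaining" (decreasing) steps is $\Delta\itr z$ minus the $O(n + \eps\itr z)$ increasing steps; since $\Delta\itr z \in \Theta_\sigma(n+\Phi\itr z)$ and $\eps\itr z \in O_\sigma(n+\Phi\itr z)$, for $\sigma$ large enough a constant fraction — say $\Theta_\sigma(\Phi\itr z) - O(n)$ — of the steps are decreasing. Putting these together gives, for suitable constants $a_1, a_2, a_3$ (with $a_1$ a decreasing fraction that grows with $\sigma$, and $a_2, a_3$ fixed),
\begin{equation*}
    \Phi\itr{z+1} ~\le~ \Phi\itr z - a_1\,\Phi\itr z + a_2\,n ~=~ (1-a_1)\,\Phi\itr z + a_2\,n.
\end{equation*}
I would be careful to track that the net contribution over one iteration is genuinely a multiplicative contraction: the algorithm's amortized decrease of $\Theta(1)$ per decreasing step, across $\Theta_\sigma(\Phi\itr z)$ such steps, beats the $O(n + \eps\itr z) = O_\sigma(n + \Phi\itr z)$ worth of increases once $\sigma$ is chosen large enough that the decrease coefficient dominates the increase coefficient multiplying $\Phi\itr z$.

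With the recurrence $\Phi\itr{z+1} \le (1-a_1)\Phi\itr z + a_2 n$ in hand, standard unrolling gives $\Phi\itr z \le (1-a_1)^z \Phi\itr 0 + \frac{a_2}{a_1} n$. Since $\Phi\itr 0 \in O(n \log \Lambda_0)$ by Eq.~\eqref{eqn:pot-zero}, the first term drops below $n$ once $(1-a_1)^z \log \Lambda_0 \le 1$, i.e. after $z_0 \in O\!\left(\frac{\log\log\Lambda_0}{\log(1/(1-a_1))}\right) = O(\log\log\Lambda_0)$ iterations. For all $z > z_0$ we then have $\Phi\itr z \le n + \frac{a_2}{a_1}n \in O(n)$, as claimed. (This $z_0$ is also what feeds into the burn-in bound $t_0 \in O(n\log\Lambda_0 \log\log\Lambda_0)$ of Theorem~\ref{thm:main}, via summing $\Delta\itr z \in \Theta_\sigma(n + \Phi\itr z)$ over $z \le z_0$.)

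The main obstacle is bookkeeping the interaction between the evolver's steps and the algorithm's steps \emph{within} a single iteration to justify the clean recurrence — specifically, confirming that the quantity $\eps\itr z$ of evolver steps, which itself depends on $\Delta\itr z$ and hence on $\Phi\itr z$, does not swamp the algorithm's progress. This is exactly where the speedup $\sigma$ must be chosen: it has to be large enough that (i) the algorithm completes enough decreasing steps per iteration to realize a fixed-fraction drop in $\Phi\itr z$, and (ii) the $O(\eps\itr z)$ increasing steps contributed by reacting to the evolver are only a small fraction of the decreasing steps. I would fix $\sigma$ once, as a function of the constants from Lemmas~\ref{lem:EvoStep}, \ref{lem:alg-steps}, and~\ref{lem:iter-time-pot}, and then verify the recurrence coefficients are valid for that choice; one must also check consistency, since Lemma~\ref{lem:iter-time-pot} already posits "a constant speedup factor $\sigma$," so the $\sigma$ here should be taken as the maximum of that one and the threshold needed for contraction. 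A minor secondary point is handling the amortization in Lemma~\ref{lem:alg-steps} across iteration boundaries (the "amortized sense" claim), which I would absorb by carrying a bounded amortization credit from one iteration to the next — this only shifts the constants and does not affect the $O(\log\log\Lambda_0)$ bound on $z_0$ or the final $O(n)$ steady-state value.
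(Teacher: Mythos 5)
Your proof follows essentially the same route as the paper's: combine Lemmas~\ref{lem:EvoStep}, \ref{lem:alg-steps}, and \ref{lem:iter-time-pot} to obtain the contraction $\Phi\itr{z+1}\le(1-a_1)\Phi\itr z+a_2 n$, unroll, and use $\Phi\itr 0\in O(n\log\Lambda_0)$ to get convergence to $O(n)$ after $O(\log\log\Lambda_0)$ iterations. One small bookkeeping slip worth flagging: you treat $\Delta\itr z$ as both the duration of the iteration and the number of algorithm steps, whereas the paper carefully counts $\sigma\Delta\itr z$ algorithm steps over a time interval of length $\Delta\itr z$ — this is only a constant-factor discrepancy since $\sigma$ is constant, but it is precisely the place where the $\sigma$-dependence enters the decrease term $\lambda(\sigma\Delta\itr z - O(n+\Delta\itr z))$, so keeping it explicit avoids confusion when choosing $\sigma$ large enough for the contraction coefficient to be positive.
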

\begin{proof}
    By \cref{lem:EvoStep}, the $z$th iteration the evolver takes at most $\eps \itr z  = O(\Delta \itr z)$ steps and increases $\Phi$ by at most $O(\Delta \itr z)$. Therefore, by \cref{lem:alg-steps}, \ref{Alg:TrackByZoom} increases $\Phi$ by $O(n+\Delta \itr z)$. Since the algorithm runs at a speedup factor $\sigma$ it takes $\sigma \Delta \itr z$ steps, and by \cref{lem:alg-steps}, it decreases $\Phi$ by at least $\lambda (\sigma~ \Delta \itr z - O(n+\Delta \itr z))$, for some $\lambda \in O(1)$. Accounting for all the increases and decreases we have:
    \begin{align*}
        \Phi \itr {z+1} 
            & ~ \leq ~ \Phi \itr z + O(\Delta \itr z) + O(n+\Delta \itr z) - \lambda (\sigma~ \Delta \itr z - O(n+\Delta \itr z))\\
            & ~ \leq ~ \Phi \itr z + O(\Delta \itr z) - \sigma\lambda \Delta \itr z + O(n).
    \end{align*}
    There exists a sufficiently large $\sigma \in O(1)$ such that $\sigma \lambda \Delta \itr z - O(\Delta \itr z) \geq a \Delta \itr z$, for $0 < a \in O(1)$. So,
    \[
        \Phi \itr {z+1} 
            ~ \leq ~ \Phi \itr z  - a\Delta \itr z + O(n).
    \]
    By \cref{lem:iter-time-pot}, for a sufficiently large (constant) $\sigma$, there exists $c_\sigma > 0$ such that $\Delta \itr z \geq c_\sigma \left(n+ \Phi \itr z\right)$. Therefore,
    \begin{align*}
        \Phi \itr {z+1} 
            & ~ \leq ~ (1-a c_\sigma)\Phi \itr z + (1-a c_\sigma) O(n)\\
            & ~ \leq ~ (1-a c_\sigma)^{z+1}\Phi \itr 0 + \left(\sum_{y=1}^{z+1}(1-a c_\sigma)^y\right) O(n).
    \end{align*}
    By Eq.~\eqref{eqn:pot-zero}, $\Phi \itr 0 = \Phi \spr 0 \in O( n \log \Lambda_0)$, and since there exists $b > 1$ such that $(1 - a c_\sigma) < 1/b < 1$, we have
    \begin{align}
        \Phi \itr {z+1}\label{eq:conv-after-burn}
            & ~ \leq ~ \frac{O( n \log \Lambda_0)}{b^{z+1}}  ~+~ O(n)\\ 
            & ~ \leq ~ O(n), \quad\qquad \text{where}~ z > \log \log \Lambda_0. \notag
    \end{align}
\end{proof}

If $\Phi \itr {z} \in O(n)$, then by \cref{lem:iter-time-pot}, $\Delta \itr z \in \Theta(n)$ as well. Thus, $\Phi$ increases by at most $O(n)$ throughout iteration $z$. Therefore, for any time $t$ during iteration $z$, $\Phi \spr t \in O(n)$ as well.

Using Eq.~\eqref{eq:conv-after-burn} we observe that for all $z > 1$, $\Phi \itr z \leq O(n\log \Lambda_0)$. By \cref{lem:iter-time-pot}, this implies that $\Delta \itr z \in \Theta(n \log \Lambda_0)$. Hence, for some $t_0 \in O(n \log \Lambda_0\cdot\log\log \Lambda_0)$, we have $\Phi \spr {t} \in O(n)$, for all $t > t_0$.

By combining Lemmas~\ref{lem:Pot-converge} and~\ref{lem:PotDist}, we complete the proof of \cref{thm:main}. It is noteworthy that there is a trade-off between the speedup factor $\sigma$ and the motion parameter $\alpha$. Recall that with each step, the evolver can move $q_i$ by a distance of up to $\alpha N_i$. If we reduce this parameter to $\alpha' = (1+\alpha)^\frac{1}{c}-1$, then it takes the evolver at least $c$ steps to effect the same change as before. Thus, by reducing the local-motion factor, it is possible to achieve a speed-up factor of $\sigma = 1$. Formally we have,

\begin{corollary}
    There exists a motion parameter $\alpha' < 1$ such that for a set of $n$ evolving objects in $\RE^d$ under the $(\alpha',\beta)$-local-motion model, where $\beta < \frac{1}{3}$, \cref{Alg:TrackByZoom} (with speedup factor 1) maintains a distance of $O(n)$ for all $t \geq t_0$, $t_0 \in O(n \log \Lambda_0 \log\log \Lambda_0)$
\end{corollary}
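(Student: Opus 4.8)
The plan is to reduce the statement to \cref{thm:main} by converting the algorithm's speedup into a smaller motion parameter. Fix $\alpha$ with $0 < \alpha < \frac13$ and let $\sigma^*$ be the constant speedup that \cref{thm:main} furnishes for the $(\alpha,\beta)$-local-motion model. Fix an integer $c \ge \sigma^*$ (to be enlarged later if necessary) and set
\[
    \alpha' ~ = ~ (1+\alpha)^{1/c} - 1 ,
\]
so that $(1+\alpha')^{c} = 1+\alpha$ and $0 < \alpha' < \alpha < \frac13 < 1$. Since $\beta < \frac13$, the $(\alpha',\beta)$-local-motion model is itself a legitimate instance of the framework, so \cref{lem:PotDist,lem:EvoStep,lem:alg-steps} apply to \ref{Alg:TrackByZoom} run against an $(\alpha',\beta)$-evolver, now with constants depending on $\alpha'$ rather than $\alpha$, and the proof of \cref{lem:iter-time-pot} goes through for the specific speedup $\sigma = 1$ once $\alpha'$ is sufficiently small --- a condition we are free to enforce by taking $c$ large. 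The only thing left to verify is that the potential recurrence in the proof of \cref{lem:Pot-converge} still contracts when $\sigma = 1$.

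The point is that shrinking the motion parameter weakens the evolver by the same factor the algorithm has lost in speedup. Two elementary facts quantify this. First, by convexity (equivalently Bernoulli's inequality, $(1+\alpha/c)^c \ge 1+\alpha$) we get $\alpha' \le \alpha/c$; hence the per-step increase of $\Phi$ caused by an $(\alpha',\beta)$-evolver, bounded in \cref{lem:EvoStep}, is smaller by a factor $\Theta(1/c)$ than for an $(\alpha,\beta)$-evolver. Second, $c$ consecutive moves of a single point $\vctr q_i$ by an $(\alpha',\beta)$-evolver enlarge its nearest-neighbor distance by a factor at most $1+\alpha'$ per move, so the cumulative displacement of $\vctr q_i$ over such a block is at most $\alpha' N_i \sum_{t=0}^{c-1}(1+\alpha')^{t} = N_i\bigl((1+\alpha')^{c}-1\bigr) = \alpha N_i$, exactly what one move of an $(\alpha,\beta)$-evolver can produce. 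In particular the extra zoom-out and shrink work that the evolver forces on \ref{Alg:TrackByZoom} (the $O(\eps\itr z)$ term of \cref{lem:alg-steps}) is reduced as well: knocking a previously tracked index $i$ back into the zoom-out phase requires $X_i$ to exit its hypothesis ball $\mc B^H_i$, which contains $\mc B_i$ (of radius $l_i = \beta N_i$) throughout the zoom-in phase (\cref{lem:exp-factor}), hence requires a cumulative displacement of $\Omega(\beta N_i)$, i.e.\ $\Omega(\beta/\alpha')$ of the evolver's moves on $i$, charged in an amortized sense.

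Combining these estimates, over iteration $z$ the evolver is responsible for a total potential increase of only $O(n) + O(\eps\itr z / c)$ rather than $O(n) + O(\eps\itr z)$, while \ref{Alg:TrackByZoom} at speedup $\sigma = 1$ takes $\Delta\itr z$ steps, of which $\Delta\itr z - O(n + \eps\itr z/c)$ each decrease $\Phi$ by $\Theta(1)$ in the amortized sense of \cref{lem:alg-steps}. Using $\eps\itr z \le \Delta\itr z$ (the evolver takes at most one step per time unit) and choosing $c$ large enough, equivalently $\alpha'$ small enough, that the net coefficient of $\Delta\itr z$ is a negative constant $-a$, we obtain exactly the recurrence of \cref{lem:Pot-converge},
\[
    \Phi\itr{z+1} ~ \le ~ \Phi\itr z - a\,\Delta\itr z + O(n) ~ \le ~ (1 - a c_1)\,\Phi\itr z + O(n) ,
\]
where $c_1 > 0$ is the proportionality constant from \cref{lem:iter-time-pot}. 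As in that proof, $\Phi\itr z \in O(n)$ for all $z$ past some $z_0 = O(\log\log\Lambda_0)$, with $\Phi\itr z \le O(n\log\Lambda_0)$ throughout; combining with \cref{lem:iter-time-pot} bounds the burn-in by $t_0 \in O(n\log\Lambda_0\log\log\Lambda_0)$, and \cref{lem:PotDist} then gives $\mc D\spr t \in O(n)$ for all $t \ge t_0$.

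I expect the main obstacle to be the amortized bound on evolver-forced work: making precise that a smaller motion parameter reduces not only the direct damage of \cref{lem:EvoStep} but also how often the evolver can force \ref{Alg:TrackByZoom} back into a zoom-out or shrink step. A purely step-by-step estimate is too weak, since a single small move can trigger a re-zoom-out when the current hypothesis ball only barely contains $\mc B_i$; one must instead argue that immediately after \ref{Alg:TrackByZoom} (re)establishes a hypothesis for index $i$ there is a displacement budget of $\Omega(\beta N_i)$ before $X_i$ can leave $\mc B^H_i$, that each move of the evolver on $i$ consumes at most $\alpha' N_i$ of that budget, and that these budgets compose correctly both across iterations and across the evolver's moves on the neighbors of $i$, which change $N_i$ and hence $l_i$. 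Once the $\alpha$-dependence of the constants in \cref{lem:EvoStep,lem:alg-steps,lem:iter-time-pot} is tracked in this way, everything else is the bookkeeping already carried out in \cref{sec:Alg}, with $\alpha$ replaced by $\alpha'$ and $\sigma$ set to $1$.
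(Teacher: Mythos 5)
Your approach matches the paper's: the paper's entire justification for this corollary is the single observation preceding it, that setting $\alpha' = (1+\alpha)^{1/c}-1$ forces the evolver to take $c$ steps to effect what one $(\alpha,\beta)$-step could. Your Bernoulli bound $\alpha' \le \alpha/c$ and the geometric-sum computation $N_i\bigl((1+\alpha')^c - 1\bigr) = \alpha N_i$ are exactly this observation made precise, so you are not taking a different route.

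However, there is a genuine gap — one you partially flag yourself, but do not close, and which the paper's one-line sketch does not address either. The problem is the $O(\eps\itr z)$ term in \cref{lem:alg-steps}: the number of algorithm steps that \emph{increase} $\Phi$ because the evolver forces a re-entry into zoom-out. That term counts restarts, and a restart is triggered whenever the evolver moves $\vctr q_i$ enough that $\mc B_i \not\subseteq \mc B^H_i$. Shrinking $\alpha$ to $\alpha'$ scales the evolver's \emph{direct} contribution to $\Phi$ (the constant in \cref{lem:EvoStep}) by $\Theta(\alpha')$, but it does not automatically scale the restart count, because the containment $\mc B_i \subseteq \mc B^H_i$ guaranteed by \cref{lem:exp-factor} can be \emph{tight}: immediately after line~\ref{alg:line:ziexp} the slack $\frac{3}{1-2\beta}h_i - (h_i + 2l_i)$ is zero when $l_i = \frac{1+\beta}{1-2\beta}h_i$, so a single move of size $\alpha' N_i$, however small, can knock $X_i$ out. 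Your proposed fix — that each re-zoom-out requires cumulative displacement $\Omega(\beta N_i)$, hence $\Omega(\beta/\alpha')$ evolver steps charged amortized — is the right idea, but as stated it does not follow from \cref{lem:exp-factor} alone, and you would additionally have to argue that after the cheap restart (one doubling of $h_i$, which does create $\Omega(h_i)$ slack) the budget resets in a way that composes across the zoom-in shrinks and across moves of $i$'s neighbors, which change $l_i$. Without this amortization, the recurrence at $\sigma=1$ reads $\Phi\itr{z+1} \le \Phi\itr z + (c_1 + c_2\alpha' - c_3)\Delta\itr z + O(n)$ with $c_1$ (the restart constant) \emph{not} scaled by $\alpha'$, and there is no reason for the coefficient of $\Delta\itr z$ to be negative. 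So the proposal is incomplete precisely where you suspect it is; to your credit, you identify the obstacle more honestly than the paper does.
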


\section{Extensions and Conclusion} \label{sec:general-prob}


The local-motion model introduced in Section~\ref{sec:problem} assumes a uniform probability distribution for each object. However, the algorithm \ref{Alg:TrackByZoom} with an appropriate speedup factor can be applied to a wider class of probability distributions. A distribution $P_i$ in this class has $\mc B_i$, the volume of the $i$th object, as its support, and is scaled according to $l_i$, the local feature size. Many natural truncated distributions \cite{Dod03}, such as the uniform distribution, the truncated normal distribution \cite{Nie22}, the truncated Cauchy distribution belong to this family. Details are presented in \cref{sec:Extensions}

An additional refinement to the motion model we suggested might entail redefining the local-feature and the speed of a point based on the distance to its $k$-nearest neighbors, for a fixed $k$, instead of just using the nearest neighbor distance. We conjecture that our principal results extend to the generalization as well, given a suitable range counting oracle. Yet another variation of the model could possibly involve letting the uncertainty associated with a point have unbounded support (for e.g. a normal rather than a truncated normal distribution). This is a significantly harder problem, as a point may lie outside its local feature with a constant probability, requiring a maintaining algorithm to have a super-constant speed-up. We aim to study these in the future.




\bibliographystyle{plainurl} 
\bibliography{shortcuts,local-motion}

\clearpage
\appendix
\section{Appendix} \label{sec:appndx}

\subsection{Extensions: Other Probability Distributions}\label{sec:Extensions}

The local-motion model introduced in Section~\ref{sec:problem} assumes a uniform probability distribution for each object. In this section we show that algorithm \ref{Alg:TrackByZoom} with an appropriate speedup factor, can be applied to a wider class of probability distributions. Let $\mf P$ be any bounded probability distribution restricted to the $d$-dimensional unit ball, $\mc B (\vctr 0, 1)$. That is, there is a constant $\nu$ and $\mf P: \mc B (\vctr 0, 1) \rightarrow [0,\nu)$. The probability distribution associated with $\vctr q_i$ is defined to be
\[
    P_i(\vctr \vctr x) 
        ~ = ~ \frac{1}{l_i^d} ~\mf P \left(\frac{\vctr x - \vctr q_i}{l_i}\right) 
        ~ = ~ \frac{\mf P (\wt{\vctr x})}{l_i^d},
        \quad\quad \wt{\vctr x} = \left(\frac{\vctr x - \vctr q_i}{l_i}\right)
\]
We define $H_i$ as before (Eq.~\eqref{Eqn:HypDef}) using this probability distribution. Similar to Eq.~\eqref{Eqn:DistInit}, the distance for the $i$th object is
\[
    D_i 
        ~ = ~ \int_{\vctr x \in \mc B _i} P_i(\vctr x) \log \frac{P_i(\vctr x)}{H_i(\vctr x)} \mu(d\vctr x) 
        ~ = ~ \int_{\vctr x \in \mc B _i} P_i(\vctr x) \log P_i(\vctr x) - \int _{\vctr x \in \mc B _i}P_i(\vctr x) \log H_i(\vctr x).
\]
For $\vctr x \in \mc B_i$, we have
\[
    H_i(\vctr x) 
        ~ \geq ~ \frac{1}{\pi^d h_i ^d} \prod _{j=1}^d \left( 1 + \frac{(s_i + l_i)^2}{h_i^2} \right)^{-1},
\]
and therefore,
\[
    D_i 
        ~ \leq ~ \int_{\vctr x \in \mc B _i} P_i(\vctr x) \log P_i(\vctr x) \mu(d\vctr x) ~+~ \log \left(\frac{h_i^2 + (s_i+l_i)^2}{\pi h_i}\right)^d.
\]
By the concavity of the log function and Jensen's inequality, we have $\int P_i(\vctr x) \log P_i(\vctr x) \leq \log \int P_i^2(\vctr x)$. Clearly, $P_i^2(\vctr x) = \mf P^2(\wt{\vctr x})/l_i^{2d}$. Thus,
\[
    D_i 
        ~ \leq ~ \log \int_{\vctr x \in \mc B _i} \frac{\mf P^2(\wt{\vctr x})}{l_i^{2d}} \mu(d\vctr x) + \log \left(\frac{h_i^2 + (s_i+l_i)^2}{\pi h_i}\right)^d.
\]
Since $\mf P(\wt{\vctr x}) < \nu$, and $\vol(\mc B_i) = \omega_d l_i^d$, where $\omega_d$ denotes the volume of the unit Euclidean ball in $\RE^d$, we obtain
\[
    D_i
        ~ \leq ~ \log \frac{\omega_d\nu^2}{l_i^d} + \log \left( \frac{h_i^2 + (s_i+l_i)^2}{\pi h_i} \right)^d 
        ~ =    ~ \log \frac{\omega_d\nu^2}{\pi^d} + d \cdot \log \frac{h_i^2 + (s_i+l_i)^2}{h_i l_i}.
\]
Using the same potential function as defined in Eq.~\eqref{Eqn:PotDef}, we have
\[
    D_i 
        ~ \in ~ O(1) + O(\Phi_i).
\]
Since we argued on the potential function $\Phi$ to conclude our main result, Theorem \ref{thm:main} also holds for any probability distribution that satisfies these criteria. Note many natural distributions, such as the uniform distribution, the truncated normal distribution, the truncated Cauchy distribution, belong to this family. (Truncated distribution here refers to the conditional distribution resulting from restricting the domain to $\mc B_i$: the the volume associated with the $i$th object.)

\clearpage
\subsection{Detailed Algorithm}\label{sec:Det-Algo}
\begin{algorithm}[H]
  \renewcommand{\thealgorithm}{\textsc{TrackByZoom}} 
  \caption{\textsc{Tracking Evolving Distributions}} \label{Alg:TrackByZoom}
  \begin{algorithmic}[1]
    \Procedure{TrackByZoom}{$\mc O,n,\beta, \mc B_0$}
    \Statex \LineComment{Maintain hypothesis $\mc H$ by running the procedure at a speedup factor $\sigma$, given the membership Oracle:$\mc O$, the input size:$n$, the local feature parameter:$\beta$, and the the maximum bounding ball centred at origin:$\mc B_0$} 
    \For{$i \leftarrow$ 1 to $n$}
        \Statex \LeftComment{2}{Initially set the hypotheses according to the given bounding ball}
        \State $h_i \leftarrow \radius(\mc B_0)$, $\vctr k_i \leftarrow \vctr 0$ \Comment{Set $H_i(\vctr x) \leftarrow f_{\vctr 0,R_0}(\vctr x)$, as in Eq.(\ref{Eqn:HypDef}})
    \EndFor 
    \State $i \leftarrow 1$ \Comment{Start with the first point}
    \Statex

    \Label \texttt{\textbf{ZOOM\_OUT:}} \label{alg:line:zoomout}
    \State $\mf O_i \leftarrow \mc O(i,\vctr k_i,h_i)$, $\mf O_i^* \leftarrow \mc O(i,\vctr k_i,h_i/\beta)$ \Comment{Store values to use later}
    \If{$\mf O_i ~ = ~ (Y,\cdot)$ \textbf{and} $\mf O_i^* ~ = ~ (Y,+)$}\label{alg:line:exzoc}
    \Statex \LeftComment{1}{If $\mc B^H_i$ (hypothesis ball) contains $X_i$, and its $1/\beta$-expansion contains some $X_j$}
    \State $h_i \leftarrow \frac{3}{1-2\beta}h_i$ \Comment{Update $\mc B^H_i$ so that it contains $\mc B_i$ (local feature of $\vctr q_i$) }\label{alg:line:exzo}
        \State \Goto \texttt{ZOOM\_IN}
    \EndIf
    \State $h_i \leftarrow 2h_i$ \Comment{Zoom out by expanding $\mc B^H_i$}\label{alg:line:kzo}
    \State \Goto \texttt{ZOOM\_OUT} 
    \Statex \LeftComment{1}{Keep zooming out until $X_i \in \mc B^H_i$, and its $1/\beta$-expansion contains another $X_j$}
    \Statex

    \Label \texttt{\textbf{ZOOM\_IN:}} \label{alg:line:zoomin}
    \State $\mf O_i \leftarrow \mc O(i,\vctr k_i,h_i)$, $\mf O_i^* \leftarrow \mc O(i,\vctr k_i,h_i/\beta)$ \Comment{Store values to use later}
    \If{$\mf O_i~ = ~ (N,\cdot)$} \Comment{$X_i$ no longer in $\mc B^H_i$} \label{alg:line:exevzic}
    \State \Goto \texttt{ZOOM\_OUT}
    \EndIf
    
    \If{$\mf O_i~ = ~ (Y,\cdot)$ \textbf{and} $\mf O_i^*~ = ~ (Y,-)$} \label{alg:line:exzic}
    \Statex \LeftComment{1}{$X_i \in \mc B^H_i$, and its $1/\beta$-expansion doesn't contain another $X_j$}
        \State $i\leftarrow 1+i\mod n$ \Comment{$\Phi_i$ is a constant, ready to move to the next point}\label{alg:line:nxpt}
        \State \Goto \texttt{ZOOM\_OUT}
    \EndIf
    
    \If{$\mf O_i~ = ~ (Y,\cdot)$ \textbf{and} $\mf O_i^*~ = ~ (Y,+)$} \label{alg:line:kzic}
    \Statex \LeftComment{1}{If $\mc B^H_i$ (hypothesis ball) contains $X_i$, and its $1/\beta$-expansion contains some $X_j$}
        \For{$\mc B(\vctr x,r) \in \Psi(\mc B^H_i,2\lceil \frac{3}{1-2\beta} \rceil )$} \label{alg:line:zinested}
        \Statex \LeftComment{2}{$\Psi$ is the set of balls of radius $h_i/\left(2\lceil \frac{3}{1-2\beta} \rceil\right)$ which cover $\mc B^H_i$}
            \If{$\mc O(i,\vctr x,r) ~ = ~ (Y, \cdot)$} \Comment{Found the nested ball that contains $X_i$}
            \State $\vctr k_i \leftarrow \vctr x$, $h_i \leftarrow \frac{3}{1-2\beta}r$ \Comment{Expand the nested ball so that it contains $B_i$}\label{alg:line:ziexp}
            \EndIf
        \EndFor
    \EndIf
    \State \Goto \texttt{ZOOM\_IN} 
    \Statex \LeftComment{1}{Keep zooming in until $X_i \in \mc B^H_i$, and its $1/\beta$-expansion contains no other $X_j$}
    \Statex
    
  \EndProcedure
\end{algorithmic}
\end{algorithm}

\subsection{Deferred Proofs}\label{para:App-proofs}

Note that some of the restatements of the lemmas given here provide additional details, which were not present in the original statements. 

The following lemma justifies the expansion factor $\frac{3}{1-2 \beta}$ used in \cref{Alg:TrackByZoom}. This applies when the conditions of Lines~\ref{alg:line:exzoc} and \ref{alg:line:kzic} are satisfied.

\begin{lemma}[Expansion Factor]\label{lem:exp-factor}
If $\mc O(i,\vctr k_i,h_i) = (Y,\cdot)$ and $\mc O(i,\vctr k_i,h_i/\beta) = (Y,+)$, then $\mc B_i \subseteq \mc B\left(\vctr k_i,\frac{3}{1-2\beta} h_i\right)$.
\end{lemma}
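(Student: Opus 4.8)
The plan is to reason geometrically about what the two oracle responses tell us, and then bound the size of $\mc B_i = \mc B(\vctr q_i, \beta N_i)$ relative to the hypothesis ball $\mc B^H_i = \mc B(\vctr k_i, h_i)$. From $\mc O(i, \vctr k_i, h_i) = (Y, \cdot)$ we know $X_i \in \mc B(\vctr k_i, h_i)$, and since $X_i$ is drawn from the uniform distribution on $\mc B_i = \mc B(\vctr q_i, \beta N_i)$, the point $X_i$ lies within distance $\beta N_i$ of $\vctr q_i$. Hence $\|\vctr q_i - \vctr k_i\| \le h_i + \beta N_i$, i.e.\ $\vctr q_i$ is within distance $h_i + \beta N_i$ of the center $\vctr k_i$. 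From $\mc O(i, \vctr k_i, h_i/\beta) = (Y, +)$ we know there is some $j \ne i$ with $X_j \in \mc B(\vctr k_i, h_i/\beta)$; since $X_j$ lies within $\beta N_j$ of $\vctr q_j$ and $N_j \le \|\vctr q_i - \vctr q_j\|$ (nearest-neighbor distance), in particular $\beta N_j < \|\vctr q_i - \vctr q_j\|$ roughly, we can locate $\vctr q_j$ near $\vctr k_i$ as well.

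First I would upper-bound $N_i$. Since $N_i$ is the distance from $\vctr q_i$ to its nearest neighbor, $N_i \le \|\vctr q_i - \vctr q_j\|$ for the particular $j$ witnessing the ``$+$'' response. By the triangle inequality, $\|\vctr q_i - \vctr q_j\| \le \|\vctr q_i - \vctr k_i\| + \|\vctr k_i - X_j\| + \|X_j - \vctr q_j\| \le (h_i + \beta N_i) + h_i/\beta + \beta N_j$. Now I need to control $\beta N_j$: since $N_j$ is $\vctr q_j$'s nearest-neighbor distance, $N_j \le \|\vctr q_j - \vctr q_i\|$, so $\beta N_j \le \beta \|\vctr q_i - \vctr q_j\|$. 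Substituting and collecting the $\|\vctr q_i - \vctr q_j\|$ terms: $\|\vctr q_i - \vctr q_j\|(1 - \beta) \le h_i + \beta N_i + h_i/\beta$, and using $\beta N_i \le \beta \|\vctr q_i - \vctr q_j\|$ on the right as well, $\|\vctr q_i - \vctr q_j\|(1 - 2\beta) \le h_i(1 + 1/\beta)$, so $N_i \le \|\vctr q_i - \vctr q_j\| \le \frac{h_i(1 + 1/\beta)}{1 - 2\beta} = \frac{h_i(\beta + 1)}{\beta(1-2\beta)}$.

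Next I would assemble the containment. Any point $\vctr y \in \mc B_i$ satisfies $\|\vctr y - \vctr k_i\| \le \|\vctr y - \vctr q_i\| + \|\vctr q_i - \vctr k_i\| \le \beta N_i + (h_i + \beta N_i) = h_i + 2\beta N_i$. Plugging in the bound on $N_i$ gives $\|\vctr y - \vctr k_i\| \le h_i + 2\beta \cdot \frac{h_i(\beta+1)}{\beta(1-2\beta)} = h_i\left(1 + \frac{2(\beta+1)}{1-2\beta}\right) = h_i \cdot \frac{3}{1-2\beta}$, which is exactly the claimed radius. (The algebra may need a slightly more careful accounting of where each $\beta N_i$ or $\beta N_j$ term is relaxed, but the constant $3/(1-2\beta)$ is clearly what such a computation produces, and the hypothesis $\beta < 1/3$ is precisely what keeps $1 - 2\beta$ bounded away from zero.)

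The main obstacle is the self-referential chain of nearest-neighbor distances: $N_i$ is bounded via $\vctr q_j$, but $\vctr q_j$'s position is pinned down only through $N_j$, which in turn refers back to $\vctr q_i$. The key is to notice this loop closes cleanly because $N_j \le \|\vctr q_i - \vctr q_j\|$ and $N_i \le \|\vctr q_i - \vctr q_j\|$ both bound the same quantity, so after substitution all the nearest-neighbor terms can be absorbed into a single $\|\vctr q_i - \vctr q_j\|$ and solved for, provided $\beta$ is small enough (here $\beta < 1/3$) for the coefficient $1 - 2\beta$ to be positive. A secondary subtlety is that $X_i$ and $X_j$ are random, so the bound on $\|\vctr q_i - \vctr k_i\|$ holds only for the particular realized samples that triggered the oracle's ``$Y$'' and ``$+$'' answers — but that is exactly the regime in which Lemma~\ref{lem:exp-factor} is invoked, so it suffices to argue for any such realization.
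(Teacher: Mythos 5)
Your proof is correct, and it reaches the same bound $3/(1-2\beta)$, but it takes a genuinely different (and arguably slicker) route to the key inequality $l_i(1-2\beta) \le (1+\beta)h_i$. The paper first establishes a lower bound $\|X_j - \vctr q_i\| \ge (1/\beta - 1)\,l_j$, then promotes it to $\|X_j - \vctr q_i\| \ge (1/\beta - 1)\,l_i$ via a case analysis on whether $l_i \le l_j$ or $l_i > l_j$ (the second case using $\|\vctr q_i - \vctr q_j\| \ge N_i = l_i/\beta$), and only then combines with the triangle-inequality upper bound on $\|X_j - \vctr q_i\|$ to solve for $l_i$. You instead upper-bound $\|\vctr q_i - \vctr q_j\|$ directly by a single three-term triangle inequality and close the self-referential loop in one stroke by noting that \emph{both} $N_i \le \|\vctr q_i - \vctr q_j\|$ and $N_j \le \|\vctr q_i - \vctr q_j\|$, so the two nearest-neighbor terms can be absorbed symmetrically into the left side, giving $\|\vctr q_i - \vctr q_j\|(1-2\beta) \le h_i(1+1/\beta)$. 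This eliminates the case split entirely. The algebra in your final step ($h_i + 2\beta N_i \le h_i\bigl(1 + \tfrac{2(\beta+1)}{1-2\beta}\bigr) = \tfrac{3}{1-2\beta}\,h_i$) does indeed collapse exactly to the claimed constant, and the correctness requires only $\beta < 1/2$ for positivity of $1-2\beta$, matching the paper's hypothesis $\beta < 1/3$ with room to spare.
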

\begin{proof}
The fact that $\mc O(i,\vctr k_i,h_i) = (Y,\cdot)$ implies that $\mc B_i$ and $\mc B^H_i = \mc B(\vctr k_i,h_i)$ intersect. Since $\mc O(i,\vctr k_i,h_i/\beta) = (Y,+)$, there exists $j \neq i$ such that $\|X_j-\vctr k_i\| \leq h_i/\beta$ (see Figure~\ref{fig:expfactor}). Since $X_j$ is sampled from $\mc B_i = \mc B(\vctr q_j, l_j)$, we have $\|X_j - \vctr q_j\| \leq l_j$. By the definition of $l_j$ and the triangle inequality, we have 
\[
    l_j
        ~ =    ~ \beta N_j 
        ~ \leq ~ \beta \|\vctr q_i - \vctr q_j\| 
        ~ \leq ~ \beta (\|\vctr q_i - X_j\| + \|X_j - \vctr q_j\|)
        ~ \leq ~ \beta (\|\vctr q_i - X_j\| + l_j),
\]
which implies that $\|X_j - \vctr q_i\| \geq (1/\beta - 1) l_j$. 
    
\begin{figure}[h]
    \centering
    \includegraphics[scale=0.35]{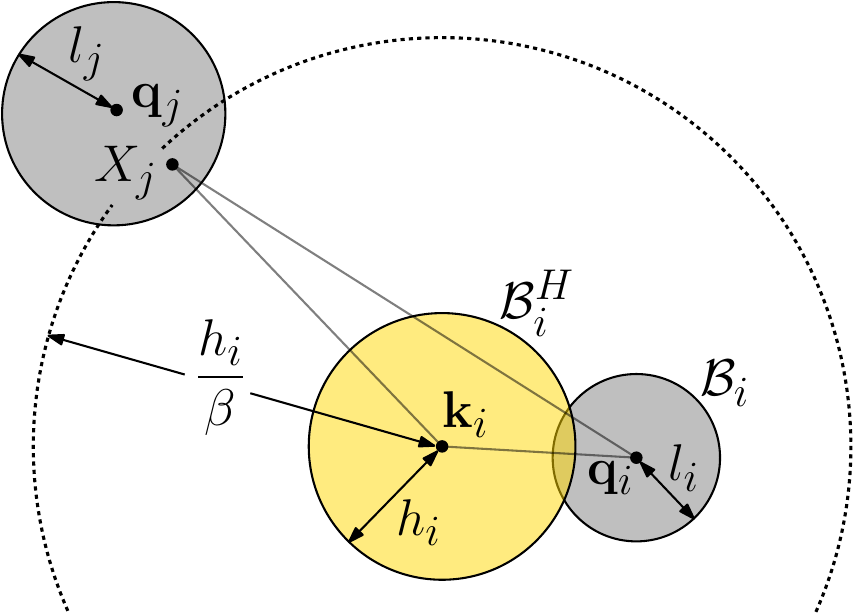}
    \caption{Proof of \cref{lem:exp-factor}.} \label{fig:expfactor}
\end{figure}

We assert that $\|X_j - \vctr q_i\| \geq (1/\beta - 1) l_i$. To see why, first if $l_j \geq l_i$, then Thus,
\[
    \|X_j - \vctr q_i\| 
        ~ \geq ~ \left( \frac{1}{\beta} - 1 \right) l_j 
        ~ \geq ~ \left( \frac{1}{\beta} - 1 \right) l_i.
\]
On the other hand, if $l_i > l_j$, then by the triangle inequality, and the fact that $\|\vctr q_i - \vctr q_j\| \geq N_i = l_i/\beta$. we have
\[
    \|X_j - \vctr q_i\|
        ~ \geq ~ \|\vctr q_i - \vctr q_j\| - \|X_j - \vctr q_j\|
        ~ \geq ~ \frac{l_i}{\beta} - l_j 
        ~ >    ~ \left( \frac{1}{\beta} - 1 \right) l_i.
\] 
By this assertion, the triangle inequality, and our earlier observations, we have, 
\begin{align*}
    l_i \left( \frac{1}{\beta} - 1 \right)
        & ~ \leq ~ \|X_j - \vctr q_i\|
          ~ \leq ~ \|X_j - \vctr k_i\| + \|\vctr k_i - \vctr q_i\| 
          ~ \leq ~ h_i/\beta + (h_i + l_i) \\
        & ~ \leq ~ \left( \frac{1}{\beta} + 1 \right) h_i + l_i,
\end{align*}
which implies that $l_i(1-2\beta) \leq (1+\beta) h_i$, and hence $h_i + 2l_i \leq \frac{3}{1-2\beta}h_i$. Because $\mc B_i$ and $\mc B(\vctr k_i,h_i)$ intersect, we have
\[
    \mc B_i
        ~ \subseteq   ~ \mc B(\vctr k_i, h_i+2l_i)
        ~ \subseteq ~ \mc B\left(\vctr k_i,\frac{3}{1-2\beta}h_i\right),
\] 
as desired.
\end{proof}

The next lemma applies whenever \cref{Alg:TrackByZoom} has completed its processing of an object, on Line~\ref{alg:line:nxpt}. It shows that the potential value for this object does not exceed a constant.

\begin{lemma}[Done tracking $X_i$]\label{lem:almst-trckd}
There exists a constant $\phi_0$, such that whenever \cref{Alg:TrackByZoom} completes its processing of any object $i$, $\Phi_i \leq \phi_0 \in O(1)$.
\end{lemma}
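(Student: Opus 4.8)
The plan is to show that, at the instant \cref{Alg:TrackByZoom} executes Line~\ref{alg:line:nxpt} for object~$i$ (so the test on Line~\ref{alg:line:exzic} has just passed), the three quantities $s_i = \|\vctr q_i - \vctr k_i\|$, $l_i = \beta N_i$, and $h_i$ lie within constant factors of one another, with the constants depending only on $\beta$. Given that, $\Phi_i = \log\bigl(\max(s_i,l_i,h_i)/\sqrt{l_i h_i}\bigr)$ is automatically at most some constant $\phi_0 \in O(1)$.

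Two facts drive the argument. First, the zoom-in phase maintains the invariant $\mc B_i \subseteq \mc B^H_i$ (established right after the expansion on Line~\ref{alg:line:exzo} by \cref{lem:exp-factor}, and re-established after each shrink on Line~\ref{alg:line:ziexp}); in particular $s_i + l_i \le h_i$, so $\max(s_i,l_i,h_i) = h_i$ and $s_i \le h_i - l_i$. Second, reaching Line~\ref{alg:line:nxpt} means that at that moment $\mf O_i^* = \mc O(i,\vctr k_i,h_i/\beta) = (Y,-)$, i.e.\ no $X_j$ with $j \ne i$ lies in $\mc B(\vctr k_i, h_i/\beta)$. Let $\vctr q_m$ be the nearest neighbor of $\vctr q_i$, so $\|\vctr q_m - \vctr q_i\| = N_i = l_i/\beta$, and since $N_m \le N_i$ also $l_m = \beta N_m \le l_i$. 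Then
\[
    \|X_m - \vctr k_i\|
        ~\le~ \|X_m - \vctr q_m\| + \|\vctr q_m - \vctr q_i\| + \|\vctr q_i - \vctr k_i\|
        ~\le~ l_m + N_i + s_i
        ~\le~ l_i + \tfrac{l_i}{\beta} + (h_i - l_i)
        ~=~ h_i + \tfrac{l_i}{\beta}.
\]
Since $X_m$ ($m \ne i$) is not in $\mc B(\vctr k_i, h_i/\beta)$, the left side exceeds $h_i/\beta$, so $h_i/\beta < h_i + l_i/\beta$, giving $h_i(1-\beta) < l_i$. Together with $l_i \le h_i$ (from the invariant) this yields $l_i \le h_i < l_i/(1-\beta)$, hence
\[
    \Phi_i ~=~ \log\frac{h_i}{\sqrt{l_i h_i}} ~=~ \tfrac12\log\frac{h_i}{l_i} ~<~ \tfrac12\log\frac{1}{1-\beta} ~=:~ \phi_0 ~\in~ O(1),
\]
which for $\beta < \tfrac13$ is less than $\tfrac12\log\tfrac32$.

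The step I expect to be the main obstacle is making the containment invariant airtight at the completion moment, since the covering balls in the inner loop (Lines~\ref{alg:line:zinested}--\ref{alg:line:ziexp}) have radius only a constant fraction of $h_i$, and a shrink need not preserve $\mc B_i \subseteq \mc B^H_i$ exactly. I would handle this by (a) noting that each shrink is executed only when $\mf O_i = (Y,\cdot)$ and $\mf O_i^* = (Y,+)$, which by the computation in \cref{lem:exp-factor} forces $l_i \le \tfrac{1+\beta}{1-2\beta}h_i$ at that step, so the chosen sub-ball, expanded by $\tfrac{3}{1-2\beta}$, still contains $\mc B_i$ up to a fixed constant factor; (b) observing that evolver moves occurring while object~$i$ sits in zoom-in change $N_i$ (hence $l_i$) by at most a factor $(1\pm\alpha)$ per step and $s_i$ by at most $\alpha N_i$, and only $O(1)$ such moves can intervene between the last update of $(\vctr k_i,h_i)$ and completion; and (c) remarking that any move large enough to push $X_i$ out of $\mc B^H_i$ is caught by Line~\ref{alg:line:exevzic}, which returns control to zoom-out before Line~\ref{alg:line:nxpt} can fire. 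The cleanest route is therefore to prove a constant-factor-slack invariant $\mc B_i \subseteq c' \mc B^H_i$ with $c' \in O(1)$ throughout zoom-in; rerunning the display above with $s_i \le c' h_i$ in place of $s_i \le h_i - l_i$ still gives $h_i$ and $l_i$ within a constant factor (provided $c'\beta < 1$, which holds for $\beta < \tfrac13$), and hence $\Phi_i \le \phi_0 \in O(1)$ with a slightly larger $\phi_0$.
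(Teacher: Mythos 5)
Your proposal follows essentially the same strategy as the paper's proof: pin down $l_i$, $s_i$, and $h_i$ to within constant factors of each other using (a) the ``$-$'' response on the $1/\beta$-expansion to lower-bound $l_i$ against $h_i$ via the nearest neighbor, (b) the ``$Y$'' response to bound $s_i$, and (c) the expansion guarantee from Lemma~\ref{lem:exp-factor} to upper-bound $l_i$ against $h_i$. Your lower-bound computation is correct and actually slightly sharper than the paper's, since you feed in $s_i \le h_i - l_i$ where the paper only uses $s_i \le h_i + l_i$.

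The one genuine difference worth noting is how each handles the step you flag as the ``main obstacle.'' You initially assume the strong invariant $\mc B_i \subseteq \mc B^H_i$, which (as you correctly observe) a zoom-in shrink need not preserve: the covering ball $\mc B(\vctr x, r)$ chosen in Line~\ref{alg:line:zinested} only has $\mc O(i,\vctr x,r) = (Y,\cdot)$ verified, not the ``$+$'' condition on $\mc B(\vctr x, r/\beta)$ that Lemma~\ref{lem:exp-factor} technically requires, so the post-shrink ball need not contain $\mc B_i$. The paper sidesteps this entirely by \emph{never} invoking the strong containment at the final step: it only uses the intersection bound $s_i \le h_i + l_i$ (which is immediate from $X_i \in \mc B^H_i \cap \mc B_i$, no invariant needed), and obtains the needed $l_i \lesssim h_i$ by appealing to the \emph{previous} hypothesis ball $\mc B^{H^-}_i$ together with the fact that the shrink factor is bounded by a constant. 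Your proposed fix -- proving the constant-slack invariant $\mc B_i \subseteq c'\mc B^H_i$ -- is a valid alternative and is essentially equivalent in content to the paper's argument; both reduce to extracting the bound $l_i \le \frac{1+\beta}{1-2\beta} h_i^{\text{pre-shrink}}$ from the conditions that held just before the last shrink and noting the shrink ratio is $\Theta(1)$. One small quibble in the other direction: the paper's claim ``$\mc B_i \subseteq \mc B^{H^-}_i$'' is itself slightly loose (Lemma~\ref{lem:exp-factor} gives $\mc B_i$ inside a $\frac{3}{1-2\beta}$-\emph{expansion} of the pre-shrink ball, not the pre-shrink ball itself), and the stated constant $2$ in ``$2h_i \ge l_i$'' is really the $\beta \to 0$ limit; the true constant is $\frac{2(1+\beta)\lceil 3/(1-2\beta)\rceil}{3}$, which is still $O(1)$. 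So in fact your more careful discussion of the slack is a point in your favor, not a deficiency.
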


\begin{proof}
The algorithm moves on to the next point when the condition in line \ref{alg:line:exzic} is satisfied ($\mc O(i,\vctr k_i,h_i) = (Y,\cdot)$ and $\mc O(i,\vctr k_i,h_i/\beta) = (Y,-)$). Since $\mc O(i,\vctr k_i,h_i) = (Y,\cdot)$, $\mc B^H_i = \mc B(\vctr k_i,h_i)$ intersect. Moreover, if $\vctr q_j$ is the current nearest neighbor of $\vctr q_i$, $\mc O(i,\vctr k_i,h_i/\beta) = (Y,-)$ implies $X_j$ lies outside the $1/\beta$ expansion of $\mc B^H_i$. See \cref{fig:almst-trckd}.

\begin{figure}[h]
    \centering
    \includegraphics[scale=0.35]{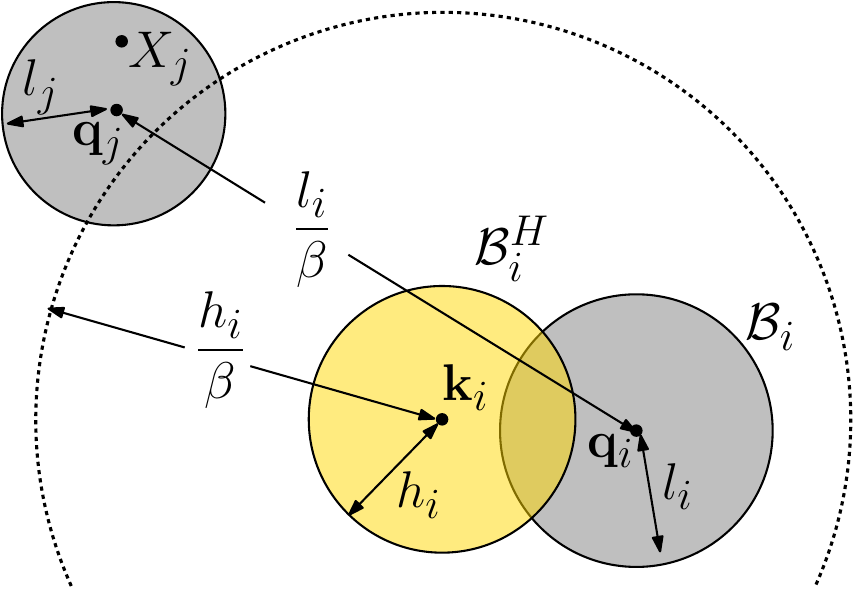}
    \caption{Proof of \cref{lem:almst-trckd}.} \label{fig:almst-trckd}
\end{figure}

Now because $\vctr q_j$ is the nearest neighbor of $\vctr q_i$, $N_j \leq N_i$, hence $l_j \leq l_i$.  By triangle inequality, the following series of inequality holds:
\[
\frac{h_i}{\beta} \leq \|X_j - \vctr k_i\| \leq \|X_j - \vctr q_i\| + \|\vctr q_i - \vctr k_i\| \leq \left(\frac{l_i}{\beta} + l_j\right) + \left(h_i + l_i\right) \leq \left(\frac{1}{\beta} + 2\right)l_i + h_i,
\]
implying $\frac{1-\beta}{1+2\beta}h_i \leq l_i$
Now consider the hypothesis ball: $\mc B ^{H^-} _i$ set by the algorithm in the previous step (via Line \ref{alg:line:ziexp}). From \cref{lem:exp-factor}, we have $\mc B_i \subseteq \mc B ^{H^-} _i$.  Let $h_i ^ - = \radius\left(\mc B ^{H^-} _i\right)$. Since the algorithm shrinks the hypothesis ball by at least a factor of 2 during the zoom-in process, $h_i ^ -$ is at least $2h_i$. And therefore $2h_i \geq l_i$. From \cref{fig:almst-trckd} we also see that $s_i \leq h_i + l_i$. Therefore,
\begin{align*}
    \Phi_i = \log \left(\frac{\max(s_i,l_i,h_i)}{\sqrt{l_i h_i}}\right) &\leq \log \left(\frac{l_i + h_i}{\sqrt{l_i h_i}}\right)\\
           & \leq \log \left(\frac{2h_i + h_i}{\sqrt{\frac{1-\beta}{1+2\beta}h_i \cdot h_i}}\right)
             \leq \phi_0,\quad \text{for } \phi_0 = \log \left(3 \cdot \sqrt{\frac{1+2\beta}{1-\beta}}\right)
\end{align*}
\dave{It is a bit unclear what this refers to. The values of $l_i$ and $h_i$ during the current step or during the previous zoom-in step.}\adi{Added a figure, and rewrote the proof. Hopefully this is clearer}
\end{proof}

\phantomsection\label{prf:thm:lower-bound}
\begin{retheorem}{\ref{thm:lower-bound}}[Lower Bound for the Distance]
For any algorithm $\mc A$, there exists a starting configuration $Q \spr 0$ and an evolver (with knowledge of $\mc A$) in the local-motion model such that, for any positive integer $t_0$, there exists $t \geq t_0$, such that $\mc D \spr t = \mc D (\mc P \spr t, \mc H \spr t) \in \Omega(n)$.
\end{retheorem}
\begin{proof}
We construct a point set on the real line ($\RE$), and we will mention at the end how to adapt the proof to any dimension $d$. Given the small constant $\beta$, the local feature constant, we define $Q \spr 0$ to be the following set of $n = 2 m$ points in $\RE$.
\[
    Q \spr 0 
        ~ = ~ \bigcup_{i\in [m]} \left\{a_i\spr 0 = 100 i\right\} \cup \bigcup_{i\in [m]} \left\{b_i \spr 0 = 100 i + 1\right\}.
\]
Let $\mc D_{a,i} = \mc D_{2(k-1)+1},\; k \in [m]$, the distance corresponding to the first point in the tuple. We similarly define $P_{a,i}$ and $H_{a,i}$. Let $N_i$ denote the current distance to $a_i$'s nearest neighbor. The local feature interval for $a_i$, denoted $I_i$, is $100 i + \beta N_i [ -1, 1]$. Recall that $P_{a,i}$ is the uniform probability over $I_i$. Therefore, $P_{a,i}(\vctr x) = 1/|\mc I_i|$, where $|\mc I_i| = 1/\beta N_i$ is the diameter of $\mc I_i$. Now
\begin{flalign*}
    \mc D\spr {t_0} _{a,i} 
        & ~ = ~ \int_{\vctr x \in \mc I_i} P\spr {t_0}_{a,i}(\vctr x) \log \frac{P\spr {t_0}_{a,i}(\vctr x)}{H\spr {t_0}_{a,i}(\vctr x)} d\vctr x \\
        & ~ = ~ \int_{\vctr x \in \mc I_i} P\spr {t_0}_{a,i}(\vctr x) \log P\spr {t_0}_{a,i}(\vctr x) d\vctr x ~-~ \int_{\vctr x \in \mc I_i} P\spr {t_0}_{a,i}(\vctr x) \log H\spr {t_0}_{a,i}(\vctr x) d\vctr x.
\end{flalign*}
By the concavity of the log function and Jensen's inequality, we have
\begin{flalign}
    \mc D\spr {t_0} _{a,i}
        & ~ \geq ~ -\log|\mc I_i|\int _{\vctr x \in \mc I_i}P\spr {t_0} _{a,i}(\vctr x) d\vctr x ~-~ \log \int _{\vctr x \in \mc I_i} H\spr {t_0}_{a,i}(\vctr x)P\spr {t_0} _{a,i}(\vctr x) d\vctr x \nonumber\\
        & ~ = ~ -\log|\mc I_i| - \log \frac{ \int _{\vctr x \in \mc I_i} H\spr {t_0}_{a,i}(\vctr x) d\vctr x}{|\mc I_i|}
          ~ = ~ - \log \int_{\vctr x \in \mc I_i} H\spr {t_0}_{a,i}(\vctr x) d\vctr x. \label{Eqn:Cross-Entropy}
\end{flalign}

We may assume that $\mc D \spr {t_0} \in o(n)$ (for otherwise we can set $t = {t_0}$ and are done). This implies there are only $o(n)$ many $a_i$'s such that $\mc D\spr {t_0} _{a,i} \in \Omega(1)$. 
Call the remaining $n/2 - o(n)$, $a_i$'s the proximal set, denoted $T_a$.

We are now ready to describe the evolver's actions. It chooses to stay dormant until time ${t_0}$. For a large enough constant $M$, consider the time interval $[{t_0},{t_0}+n/M]$. The algorithm $\mc A$, running at a speedup factor of $\sigma$, can modify at most $\sigma n/M$ hypotheses in the time interval. The evolver chooses not to move any of those points. Therefore $\mc A$ can only reduce $\mc D \spr {t_0}$ by $o(n)$ over this time interval. Now there are at least $n/2 - \sigma n/M -o(n)$ members in the proximal set $T_a$, whose hypotheses were not altered by $\mc A$. Call this set the stable set, denoted $S_a$.
    
For $\kappa = \lceil\log(2/(1+\alpha))\rceil$, the evolver selects some $n/(\kappa M)$ members from $S_a$. Call that set $S'_a$. To be specific, for all $a_i \in S'_a$, the evolver chooses to move $a_i$ away from $b_i$ some $\kappa$ times, by a distance of exactly $\alpha N_{a,i}$, where $N_{a,i}$ is the distance of $a _i$ from its current nearest neighbor. (Note that the nearest neighbor will remain $b_i$ throughout these operations.) Given this value of $\kappa$, after the conclusion of these operations, the distance between $a_i$ and $b_i$ is at least 2. For $\beta < 1/3$, the local feature of $a_i$ changes to an interval $\mc I'_i$, where $\mc I'_i \cap \mc I_i = \emptyset$. Now, for $t={t_0}+n/M$, using a similar analysis as Eq.~\eqref{Eqn:Cross-Entropy}, we have 
\begin{equation}
    \mc D\spr t _{a,i} 
        ~ \geq ~ - \log \int_{\vctr x \in \mc I'_i} H\spr t_{a,i}(\vctr x) d\vctr x. \label{Eqn:Change-Entropy}
\end{equation}
Thus, for all $a_i \in S'_a$, we have $H\spr t _{a,i}= H\spr {t_0} _{a,i}$ and $D\spr {t_0} _{a,i} = o(1)$. Therefore, 
\[
    \int_{\vctr x \in \mc I_i} H\spr {t_0}_{a,i}(\vctr x) d\vctr x 
        ~ \geq ~ e^{-o(1)}.
\]
If $\mc D\spr t _{a,i} \in o(1)$ as well, then similarly from Eq.~\eqref{Eqn:Change-Entropy}, we have 
\[
    \int_{\vctr x \in \mc I'_i} H\spr t_{a,i}(\vctr x) d\vctr x 
        ~ \geq ~ e^{-o(1)}.
\]
But, this yields a contradiction since $\mc I'_i \cap \mc I_i = \emptyset$ and $H\spr t _{a,i}= H\spr {t_0} _{a,i}$ implies that
\[
    \int_{\vctr x \in \mc I'_i \cup \mc I_i} H\spr t_{a,i}(\vctr x) d\vctr x 
        ~ \geq ~ 2e^{-o(1)} 
        ~ > ~ 1.
\]
Therefore, for $a_i \in S'_a$, $\mc D\spr t _{a,i} \in \Omega(1)$, and since $|S'_a| = \Omega(n)$, we have $\mc D \spr t \in \Omega(n)$.

For a general dimension $d$, we define $Q$, in the same way except all the points lie on a single axis. The evolver also moves these points along that particular axis. The rest of the analysis involves integration over a region of space rather an interval, but is straightforward and carries over to $\RE^d$.
\end{proof}

\phantomsection\label{prf:lem:EvoStep}
\begin{relemma}{\ref{lem:EvoStep}}[Evolver's contribution to $\Phi$]
    Every step of the evolver increases the potential function $\Phi$ by a constant.
\end{relemma}
\begin{proof}
    Let's say the evolver chooses to move $\vctr q_i$ at time $0$. Now $l \spr 0 _i/\beta$ is the nearest neighbor distance of $\vctr q_i$ at time $0$. Therefore $\vctr q_i$ moves by at most $(\alpha/\beta) l\spr 0_i$ distance, implying $s_i \spr 1 \leq s_i \spr 0 + (\alpha/\beta) l_i \spr 0 \leq \left(1+ \alpha/\beta\right)\max \left(s_i \spr 0,l_i \spr 0,h_i \spr 0\right)$. Similarly $(1-\alpha)l_i \spr 0 \leq l_i \spr 1 \leq (1+\alpha)l_i \spr 0$. Therefore,

\begin{flalign}
\begin{split}
    \Phi_i \spr 1 - \Phi_i \spr 0 & ~\leq~ \log \frac{\max \left(s_i \spr 1,l_i \spr 1,h_i \spr 1\right)}{\sqrt{l_i \spr 1 h_i \spr 1}} 
        - \log \frac{\max \left(s_i \spr 0,l_i \spr 0,h_i \spr 0\right)}{\sqrt{l_i \spr 0 h_i \spr 0}}\\
        & ~\leq~ \log \frac{\max \left(s_i \spr 0 + (\alpha/\beta) l_i \spr 0,(1+\alpha)l_i \spr 0,h_i \spr 0\right)}{\sqrt{(1-\alpha)l_i \spr 0 h_i \spr 0}} 
        ~ - ~ \log \frac{\max \left(s_i \spr 0,l_i \spr 0,h_i \spr 0\right)}{\sqrt{l_i \spr 0 h_i \spr 0}}\\
        & ~\leq~ \log \left(\frac{1+\alpha/\beta}{\sqrt{1-\alpha}} \frac{\max \left(s_i \spr 0,l_i \spr 0,h_i \spr 0\right)}{\sqrt{l_i \spr 0 h_i \spr 0}}\right)
        ~ - ~\log \frac{\max \left(s_i \spr 0,l_i \spr 0,h_i \spr 0\right)}{\sqrt{l_i \spr 0 h_i \spr 0}}\\
        & ~ = ~ \log \frac{1+\alpha/\beta}{\sqrt{1-\alpha}} \in O(1) \label{Eqn:EvoPotSelf}
\end{split}
\end{flalign}

Let $\mc N_i$ be the set of indices of points whose nearest neighbor at time 0 was $\vctr q_i\spr 0$, or whose nearest neighbor at time 1 was $\vctr q_i \spr 1$. Only points with indices in this set have their potential function changed. If $j \in \mc N_i$, $s_j \spr 0$, and $h_j \spr 0$ do not change. Since $\vctr q_i$  moves by at most $(\alpha/\beta) l\spr 0_i$ distance, the nearest neighbor distance $N_j = l_j/\beta$ changes by at most $(\alpha/\beta) l\spr 0_i$ as well. Therefore,
\begin{gather}
        -(\alpha/\beta)l_i \spr 0 ~ \leq ~ N_j \spr 1 - N_j \spr 0 ~ \leq ~ (\alpha/\beta)l_i \spr 0 \notag\\
        \implies -\alpha l_i \spr 0 ~ \leq ~ l_j \spr 1 - l_j \spr 0 ~ \leq ~ \alpha l_i \spr 0 \label{Eqn:LDiff1}
        \intertext{Since $(1-\alpha)l_i \spr 0 ~ \leq ~ l_i \spr 1$, we also have}
        -\frac{\alpha}{1-\alpha}l_i \spr 1 ~\leq ~ l_j \spr 1 - l_j \spr 0 ~ \leq ~ \frac{\alpha}{1-\alpha}l_i \spr 1 \label{Eqn:Ldiff2}
\end{gather}

If $\vctr q_i \spr 0$ was the nearest neighbor of $\vctr q_j \spr 0$, then $N_j \spr 0 \leq N_i \spr 0$, and hence $l_j \spr 0 \leq l_i \spr 0$. Using the last fact, and dividing Eq.~\eqref{Eqn:LDiff1} by $l_j \spr 0$, we have $(1-\alpha) \leq l_j \spr 1 / l_j \spr 0 \leq (1+\alpha)$. Similarly, If $\vctr q_i \spr 1$ was the nearest neighbor of $\vctr q_j \spr 1$, then using Eq.~\eqref{Eqn:Ldiff2} we have $(1-\alpha/(1-\alpha)) \leq l_j \spr 1 / l_j \spr 0 \leq (1+\alpha/(1-\alpha))$. Therefore $ l_j \spr 1 / l_j \spr 0 \in \Theta(1)$ for $j \in \mc N_i$. And hence,

\begin{flalign}
\begin{split}
    \Phi_j \spr 1 - \Phi_j \spr 0 & ~ = ~ \log \frac{\max \left(s_i \spr 1,l_i \spr 1,h_i \spr 1\right)}{\sqrt{l_i \spr 1 h_i \spr 1}} 
        ~ - ~ \log \frac{\max \left(s_i \spr 0,l_i \spr 0,h_i \spr 0\right)}{\sqrt{l_i \spr 0 h_i \spr 0}}\\
        & ~ = ~ \log \frac{\max \left(s_i \spr 0,l_i \spr 1,h_i \spr 0\right)}{\max \left(s_i \spr 0,l_i \spr 0,h_i \spr 0\right)} 
        ~ - ~ \log \frac{\sqrt{l_i \spr 1 h_i \spr 0}}{\sqrt{l_i \spr 0 h_i \spr 0}}\\
        & ~ \leq ~ \left| \log \frac{l_i \spr 1}{l_i \spr 0} \right| ~ + ~ \left|\log \frac{l_i \spr 0}{l_i \spr 1}\right| ~ \in ~ O(1), \quad \forall ~ j \in \mc N_i\label{Eqn:EvoPotOth}
\end{split}
\end{flalign}

Finally, we show that $|\mc N_i| \in O(1)$. To that effect we solve a general problem: What is the maximum number of points in $Q$, whose nearest neighbor is $\vctr q_1$? Without loss of generality, let $\vctr q_2$ be the nearest neighbor of $\vctr q_1$, such that $\|\vctr q_2 - \vctr q_1\| = 1$. Let $\vctr q_3$'s nearest neighbor be $\vctr q_1$. Consider the line segment $\overline{\vctr q_1 \vctr q_3}$, and its intersection with $\mc B (\vctr q_1, 1)$ the ball centered at $\vctr q_1$, and passing through $\vctr q_2$. Call the intersection point $\vctr q_3'$. For any general point $\vctr q_x \in Q$, we similarly define $\vctr q_x'$. Note $\vctr q_2' = \vctr q_2$. By triangle inequality, we have $\|\vctr q_3' - \vctr q_2'\| \geq \|\vctr q_3 - \vctr q_2\| - \|\vctr q_3 - \vctr q_3'\| = \|\vctr q_3 - \vctr q_2\| - \|\vctr q_3 - \vctr q_1\| + \|\vctr q_1 - \vctr q_3'\|$. Since $\vctr q_1$ is the nearest neighbor of $\vctr q_3$, we have $\|\vctr q_3 - \vctr q_2\| \geq \|\vctr q_3 - \vctr q_1\|$. Therefore $\|\vctr q_3' - \vctr q_2'\| \geq \|\vctr q_1 - \vctr q_3'\| = 1$. Therefore, for any $x,y\in [n], x\neq y$, such that $\vctr q_1$ is the nearest neighbor of $\vctr q_x$ and $\vctr q_y$, we have $\|\vctr q_x' - \vctr q_y'\| \geq 1$. Let $\mc M_1 = \{\vctr q_x' \mid \vctr q_1 \text{is the nearest neighbor of } \vctr q_x\}$. Draw a ball of radius 1/2 around $\vctr q_1$, and every member of $\mc M_1$. Each of these balls are non-overlapping. However all of them are contained within a ball around $\vctr q_1$ of radius $3/2$. Therefore $|\mc M_1| < (3/2)^d/(1/2)^d = 3^d \in O(1)$. Observing that $\mc N_i \leq 2\mc M_1$, and using Eq.~\eqref{Eqn:EvoPotSelf}, and \ref{Eqn:EvoPotOth} we have the result.
\end{proof}

\phantomsection\label{prf:lem:alg-steps}
\begin{relemma}{\ref{lem:alg-steps}}[Algorithm's contribution towards $\Phi$]
    For any iteration $z$, let $\eps \itr z$ denote the number steps executed by the evolver. Then \ref{Alg:TrackByZoom} increases $\Phi$ in $O\left(n+\eps \itr z\right)$ steps, each time by $O(1)$. In each of the remaining steps of the $z$th iteration it decreases $\Phi$ by $\Theta(1)$ in an amortized sense.
\end{relemma}
\begin{proof}
    For a particular index $i$, we first concentrate on the zoom-out routine of our \cref{Alg:TrackByZoom} (Line \ref{alg:line:zoomout}). Assume that the evolver leaves $q_i$ untouched for the time being. We later show how to handle the case when evolver moves $q_i$, while the algorithm is processing the index $i$. The algorithm increases $h_i$ by a constant factor in Line \ref{alg:line:kzo}. Therefore $\Phi = \log (\max (s_i,l_i,h_i)/\sqrt{l_i h_i})$ decreases by a constant amount whenever $\max (s_i,l_i,h_i) \neq h_i$. Now suppose $\max (s_i,l_i,h_i) = h_i$ at time $t$, for the first time during the zoom-out routine. That implies a 2-expansion of $\mc B^H_i$ in line \ref{alg:line:kzo} contains the entire $\mc B_i$. Therefore, a further $1/\beta$-expansion definitely contains another $X_j$, $j\neq i$, satisfying the condition in line \ref{alg:line:exzoc}. We conclude that once $\max (s_i,l_i,h_i) = h_i$, the algorithm only increases $\Phi$ by a constant amount before moving on to the zoom-in routine of the algorithm (line \ref{alg:line:zoomin}).

    Now, we look at the zoom-in routine of the algorithm (line \ref{alg:line:zoomin}). 
    The algorithm only zooms in as long as the condition in line \ref{alg:line:kzic} is satisfied. 
    Since we expand $\mc B_i^H$ with an expansion factor $\frac{3}{1-2\beta}$ in line \ref{alg:line:ziexp}, using \cref{lem:exp-factor} we conclude that $\mc B_i \subset \mc B_i^H$ in every step during the zoom-in routine. 
    That implies $\max (s_i,l_i,h_i) = h_i$ throughout the zoom-in process, further implying $\Phi_i = \log\sqrt{h_i/l_i}$. 
    In line \ref{alg:line:zinested} we subdivide the hypothesis ball into $|\Psi| = \left(2\lceil \frac{3}{1-2\beta} \rceil\sqrt{d}\right)^d \in O(1)$ balls and make constant number of queries to the oracle. 
    Therefore $h_i$ reduces by a factor of $\frac{3}{1-2\beta}/\left(2\lceil \frac{3}{1-2\beta} \rceil\right) \geq 2$. 
    And hence $\Phi$ reduces by at least a constant amount. We amortize the reduction over the $|\Psi|$ calls to the Oracle, to denote a constant reduction in $\Phi$ in every zoom-in step. \adi{amortized the reduction here} 
    
    Now if the evolver moved $q_i$ while the algorithm was processing the index $i$, there is a possibility that $X_i$ might move outside of $\mc B^H_i$ during the zoom-in routine. We take care of this condition in line \ref{alg:line:exevzic} to initiate the zoom-out process. Since, the algorithm possibly increases $\Phi$ by a constant amount when switching from zoom-out to zoom-in, we conclude: For every evolver's step, our algorithm might increase $\Phi$ by at most a constant amount as well.
\end{proof}

\phantomsection\label{prf:lem:iter-time-pot}
\begin{relemma}{\ref{lem:iter-time-pot}}[Iteration time proportional to Potential]
    There exists a constant speedup factor $\sigma$ for \ref{Alg:TrackByZoom} such that $\Delta \itr z \in \Theta_\sigma\left(n+\Phi \itr z\right)$.
\end{relemma}
\begin{proof}
    By \cref{lem:EvoStep}, the evolver can only increase the potential by $\Phi_\eps \itr z  = O(\Delta \itr z)$ during the iteration. 
    By \cref{lem:alg-steps} we observe that the algorithm increases $\Phi$ by some $O(n+ \Delta \itr z)$ amount. 
    Since by \cref{lem:almst-trckd}, the algorithm reduces each $\Phi_i$ to at most $\phi_0$, it reduces the overall potential by at most $\Phi \itr z - n \phi_0 + \Phi_\eps \itr z + O(n+\Delta \itr z)$ $= O(n+\Phi \itr z + \Delta \itr z)$. 
    For a speedup factor $\sigma$, the algorithm \ref{Alg:TrackByZoom} spends $O(n+\Phi \itr z + \Delta \itr z) /\sigma$ time to do so, implying $\Delta \itr z \leq O(n+\Phi \itr z + \Delta \itr z)/\sigma$, further implying $\Delta \itr z \in O_\sigma(n+\Phi \itr z)$, for a large enough constant $\sigma$.

    To see why $\Delta \itr z \in \Omega(n+\Phi \itr z)$, we follow a similar logic except observe that the evolver can decrease the potential by at most $\Phi_\eps \itr z  = O(\Delta \itr z)$ as well. 
    By \cref{lem:almst-trckd} then, the algorithm reduces the overall potential $\Phi$ by at least $\Phi \itr z - n \phi_0 - \Phi_\eps \itr z = \Omega(n+\Phi \itr z - \Delta \itr z)$, which takes it at least $\Omega(n+\Phi \itr z - \Delta \itr z)/\sigma$ time implying $\Delta \itr z \geq \Omega(n+\Phi \itr z - \Delta \itr z)/\sigma$, resulting in $\Delta \itr z \in \Omega_\sigma(n+\Phi \itr z)$.
\end{proof}


\end{document}